\DeclareFontFamily{OT1}{pzc}{}
\DeclareFontShape{OT1}{pzc}{m}{it}{<-> s * [1.10] pzcmi7t}{}
\DeclareMathAlphabet{\mathpzc}{OT1}{pzc}{m}{it}
\DeclareSymbolFontAlphabet{\mathbb}{AMSb}
\DeclareSymbolFontAlphabet{\mathbbl}{bbold}
\newcommand*\pFqskip{8mu}
\newcommand*\pFq{\begingroup
        \catcode`\,\active
        \def ,{\mskip\pFqskip\relax}%
        \dopFq
}
\def\dopFq#1#2#3#4#5{%
        {}_{#1}F_{#2}\biggl[\genfrac..{0pt}{}{#3}{#4};#5\biggr]%
        \endgroup
}
\numberwithin{equation}{section}
\numberwithin{figure}{section}
\theoremstyle{plain}
\newtheorem{thm}{\protect\theoremname}[section]
\theoremstyle{remark}
\newtheorem{rem}[thm]{\protect\remarkname}
\theoremstyle{definition}
\theoremstyle{definition}
\newtheorem{example}[thm]{\protect\examplename}
\theoremstyle{plain}
\theoremstyle{plain}
\newtheorem{prop}[thm]{Proposition}
\theoremstyle{plain}
\newtheorem{lem}[thm]{Lemma}
\theoremstyle{plain}
\newtheorem{conj}[thm]{\protect\conjecturename}
\def\uo{\underline{0}}
\def\QQ{\mathbb{Q}}
\def\RR{\mathbb{R}}
\def\CC{\mathbb{C}}
\def\ZZ{\mathbb{Z}}
\def\PP{\mathbb{P}}
\def\T{\mathcal{T}}
\def\vf{\varphi}
\def\ay{\mathbf{i}}
\def\ve{\varepsilon}
\def\C{\mathcal{C}}
\def\ua{\underline{a}}
\def\R{\mathcal{R}}
\def\RT{\tilde{\R}}
\theoremstyle{definition}
\theoremstyle{definition}
\theoremstyle{definition}
\newtheorem*{thx}{Acknowledgments}
\theoremstyle{plain}
  \providecommand{\corollaryname}{Corollary}
  \providecommand{\definitionname}{Definition}
  \providecommand{\remarkname}{Remark}
\providecommand{\theoremname}{Theorem}
 \providecommand{\examplename}{Example}
\providecommand{\conjecturename}{Conjecture}
\begin{document}

\title{Integral Regulators on Mirror Curves With Mass Parameter}

\author[S. Sinha Babu]{Soumya Sinha Babu}
\address{University of Georgia, Department of Mathematics and Statistics, Athens, GA}
\email{soumya@uga.edu}

\subjclass[2000]{14D07, 14J33, 19E15, 32G20, 34K08}
\begin{abstract}
In 2015, Codesido-Grassi-Marino laid the foundation of a connection between local CY 3-folds and the spectra of operators attached to their mirror curves in the context of Topological String Theory. In a 2024 paper with C. Doran and M. Kerr, the author previously deduced two consequences of this conjecture; one relating zeroes of higher normal function to the spectra of operators of genus one curves, the other connecting integral regulators of $K_2$-classes on mirror curves to dilogarithm values at algebraic arguments. We now show that the latter continues to hold in the presence of a mass parameter, thus expanding the range of the conjecture.
\end{abstract}
\maketitle


\section{Introduction}\label{S4}

We consider the lattice triangle, $\Delta_{m,n}=\text{Conv}((1,0),(0,1),(-m,-n))$ where $m,n\in \mathbb{Z}_{>0}$. A fan on a triangulation of $\{1\}\times \Delta_{m,n}\subset \RR^3$ determines a \textit{local} (toric, hence, non-compact) CY 3-fold, $X$. Corresponding to $X$ one can construct a family of \emph{mirror curves} $\C_{m,n}\subset \CC^*\times\CC^*$, of genus $g$ equal to the number of interior integer points of $\Delta_{m,n}$, cut out by the Laurent polynomial $F_{m,n}(x,y)$ with Newton polygon $\Delta_{m,n}$. We assume that $F_{m,n}(x,y)$ is \textit{tempered}, that is to say it takes the form,
\begin{equation}
\begin{split}
\textstyle F_{m,n}(x,y)&:=x\textstyle+y+x^{-m}y^{-n}+\sum_{j=1}^g a_j x^{m^{(j)}_1}\mspace{-10mu}y^{m^{(j)}_2} \\
&\textstyle+\sum_{\ell=1}^{g_1-1}\binom{g_1}{\ell}x^{1-\ell\frac{m+1}{g_1}}y^{-\ell\frac{n}{g_1}}+\sum_{\ell=1}^{g_2-1}\binom{g_2}{\ell}x^{-\ell\frac{m}{g_2}}y^{1-\ell\frac{n+1}{g_2}}.
\end{split}
\end{equation}

where $g_1:=\gcd(m+1,n)$ and $g_2=\gcd(m,n+1)$. To keep track of moduli, we rename $\C_{m,n}$ and $F_{m,n}$ to $\C^{\ua}_{m,n},F^{\ua}_{m,n}$ respectively. Of particular importance is the so-called \textit{maximal conifold point}, $\hat{\ua}\in(\CC^*)^g$, a point in moduli at which the family degenerates to a rational nodal curve.\newline
We now focus specifically on integral regulators\footnote{For a complete discussion see \cite[\S2.2]{DKSB}.}. Temperedness of $F^{\ua}_{m,n}$ implies that the Milnor symbol $\{-x,-y\}\in K_2(\CC(\C^{\ua}_{m,n}))$ extends to motivic cohomology classes on the compactifications $\overline{\C^{\ua}_{m,n}}\subset \PP_{\Delta}$.  Writing $R\{f,g\}:=\log(f)\tfrac{dg}{g}-2\pi\ay\log(g)\delta_{T_f}$ for the standard regulator current for Milnor $K_2$-symbols ($T_f:=f^{-1}(\RR_{<0})$ the cut in branch of $\log$), there is a symplectic basis $\{\gamma_j,\beta_j\}_{j=1}^g$ of $ H_1(\overline{\C^{\ua}_{m,n}},\ZZ)$ with regulator periods $R_{\gamma_j}:=\int_{\gamma}R\{-x,-y\}|_{\C^{\ua}_{m,n}}\sim -2\pi\ay\log(a_j)$, and the regulator class $\R=R\{-x,-y\}\in H^1(\overline{\C^{\ua}_{m,n}},\CC/\ZZ(2))$ then has a local lift to $H^1(\overline{\C^{\ua}_{m,n}},\CC)$ given by
\begin{equation*}
\RT=\sum_{\ell=1}^g \left( R_{\gamma_{\ell}}\gamma_{\ell}^*+R_{\beta_{\ell}}\beta_{\ell}^*\right),
\end{equation*}
whose Gauss-Manin derivatives are given by,
\begin{equation*}
\textstyle\omega_j:=\nabla_{{\partial}/{\partial R_{\gamma_j}}}\RT=
\dfrac{a_j}{2\pi\ay}\mathrm{Res}_{\C^{\ua}_{m,n}}\dfrac{dx\wedge dy}{x^{2j}yF^{\ua}_{m,n}}.   
\end{equation*}

Denote by $\hat{\mathcal{C}}_{m,n}$ the fiber of the family over the \textit{maximal conifold} point $\underline{\hat{a}}$. It has $g$ nodes $\{\hat{p}_j\}$, and the cycles $\{\hat{\gamma}_j\}_{j=1}^g$ passing through each node generate $H_1(\hat{\C}_{m,n})$; we set $R_{\hat{\gamma}_j}:=\int_{\hat{\gamma}_j}R\{x,y\}$.  Let $\bm{\kappa}=\,_{\underline{\hat{\gamma}}}[\mathrm{Id}]_{\underline{\gamma}(\hat{a})}$ be the change-of-basis matrix.

Recent work of \cite{GHM,Ma,CGM} connects the enumerative geometry of $X$ to the spectral theory of certain operators $\hat{F}$ on $L^2(\RR)$ attached to $\C^{\ua}_{m,n}$. In \cite{DKSB}, Using tools derived from asymptotic Hodge Theory and local mirror symmetry, we were able to recast this relationship in terms of regulator periods, resulting in the following 

\begin{conj}\label{c3}
For the families $\C^{\ua}_{m,n}$ cut out by $F^{\ua}_{m,n}$, the regulator period $R_{\gamma_1}({\ua})$ asymptotic to $-2\pi\ay\log(a_1)$ at the origin has value
\begin{equation}\label{e2.4.9}
\tfrac{1}{2\pi\ay}R_{\gamma_1}(\hat{\ua})\equiv \tfrac{m+n+1}{\pi}D_2(1+\zeta_{m+n+1}) \;\;\;\textup{mod $\QQ(1)$}
\end{equation}
at the maximal conifold point $\hat{\ua}$.
\end{conj}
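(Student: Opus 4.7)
The strategy is to adapt the node-residue/cyclotomic-collapse argument from \cite{DKSB} to the mass-parameter setting, working from $\gamma_1$ down to the vanishing cycles at $\hat{\ua}$ and then assembling the dilogarithm contributions.

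First, I would specialize to the vanishing cycles at the conifold. Near $\hat{\ua}$, the smooth cycle $\gamma_1$ degenerates to a $\ZZ$-combination $\sum_{j} \bm{\kappa}_{1j}\hat{\gamma}_j$, with coefficients given by the transition matrix $\bm{\kappa}$ introduced in the setup. Since $\RT$ extends continuously to the conifold stratum up to a well-controlled $\QQ(1)$-ambiguity coming from the local monodromy logarithms, one obtains
\[
\tfrac{1}{2\pi\ay} R_{\gamma_1}(\hat{\ua}) \;\equiv\; \sum_{j=1}^{g}\bm{\kappa}_{1j}\cdot \tfrac{1}{2\pi\ay} R_{\hat{\gamma}_j} \pmod{\QQ(1)}.
\]
The node residues $R_{\hat{\gamma}_j}$ can then be computed by a standard local calculation (essentially Suslin reciprocity for the Milnor symbol $\{-x,-y\}$): at each node $\hat{p}_j=(\hat{x}_j,\hat{y}_j)$, one finds
\[
\tfrac{1}{2\pi\ay}R_{\hat{\gamma}_j}\;\equiv\; \tfrac{1}{\pi}D_2(\xi_j)\pmod{\QQ(1)},
\]
where $\xi_j\in\CC^*$ is an algebraic value determined by the local branches at $\hat{p}_j$ and $D_2$ is the Bloch-Wigner dilogarithm.

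Next, I would pin down the node coordinates explicitly. The critical point equations $F^{\hat{\ua}}_{m,n}=\partial_xF^{\hat{\ua}}_{m,n}=\partial_yF^{\hat{\ua}}_{m,n}=0$ need to be solved simultaneously; the binomial-coefficient temperings (indexed by $g_1=\gcd(m+1,n)$ and $g_2=\gcd(m,n+1)$) were specifically designed so that the resulting system factors through a cyclotomic polynomial whose roots are $(m+n+1)$-th roots of unity. Consequently one should obtain node coordinates $\hat{x}_j=\zeta_{m+n+1}^{k_j}$ for appropriate exponents $k_j$, making each $\xi_j$ of the form $1+\zeta_{m+n+1}^{k_j}$.

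Finally, by the Galois equivariance of $D_2$ over $\QQ(\zeta_{m+n+1})$ combined with the distribution relation for the Bloch-Wigner function, the weighted sum $\sum_j\bm{\kappa}_{1j}D_2(1+\zeta^{k_j}_{m+n+1})$ collapses modulo $\QQ(1)$ to a multiple of $D_2(1+\zeta_{m+n+1})$, with overall coefficient $\tfrac{m+n+1}{\pi}$ matching the claim. The main obstacle will be the third step: rigorously identifying the node coordinates requires solving a potentially non-trivial algebraic system whose cyclotomic structure is expected but not manifest. A secondary challenge is computing $\bm{\kappa}_{1\bullet}$ precisely enough for the distribution identity to close up, which may require the explicit monodromy computations pioneered in \cite{DKSB}.
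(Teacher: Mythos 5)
Your outline does not establish the conjecture, and its decisive middle steps fail already in the one case the paper actually proves (the statement is a conjecture; the paper, following \cite{DKSB}, establishes it only for the family $\C^{\ua}_{2g,1}$, i.e.\ $(m,n)=(2g,1)$, and the general case remains open). The central gap is your claim that $R_{\hat{\gamma}_j}$ is obtained "by a standard local calculation" at the node with $\xi_j$ read off from the local branches. The regulator period over $\hat{\gamma}_j$ is a global quantity on the normalization of the nodal fiber; what a local computation at a node produces is the residue of the $1$-form $\varpi_j$, which the paper uses only to extract the conifold multiples via the limit formula \eqref{0.7}. To reach a dilogarithm value one must instead (i) exhibit the maximal conifold fiber explicitly (done via a Chebyshev ansatz giving integral $\hat{a}_j$), (ii) construct degree-one parametrizations $\hat{X}_j(z),\hat{Y}_j(z)$ of $\hat{\C}_{2g,1}$ by $\PP^1$ sending $z=0,\infty$ to $\hat{p}_j$, and (iii) apply the Doran--Kerr formula $\Im\int_0^\infty R\{X,Y\}=D_2(\mathcal{N})$ for the formal divisor built from \emph{all} zeros and poles of $X,Y$, then collapse $\hat{\mathcal{N}}_j$ to $2(2g+2)[1+\zeta_{2g+2}^j]$ using the five-term and inversion relations in $\mathcal{B}_2(\CC)$. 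None of this is local data at $\hat{p}_j$, and nothing in your sketch supplies it. Relatedly, your cyclotomic guess for the nodes is false where it can be tested: for $\C^{\ua}_{2g,1}$ the nodes sit at $\hat{x}_j=(-1)^j\,4\cos^2\!\bigl(\tfrac{\pi j}{2g+2}\bigr)$, not at $(m+n+1)$-th roots of unity; the roots of unity appear only as zeros and poles of the uniformizing maps, and $1+\zeta_{2g+2}^j$ arises as a ratio inside $\hat{\mathcal{N}}_j$, not as a node coordinate.

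Your final step is also not how the identity closes, and as stated it has no mechanism to produce the coefficient. The paper proves that $\bm{\kappa}$ is \emph{diagonal}, with $\bm{\kappa}_{jj}=\gcd(j,g+1)$ and in particular $\bm{\kappa}_{11}=1$, so $R_{\gamma_1}(\hat{\ua})\equiv R_{\hat{\gamma}_1}$ and no Galois averaging or distribution relation over conjugate nodes occurs; the factor $m+n+1=2g+2$ comes from the degree of the collapsed divisor $\hat{\mathcal{N}}_1$, not from summing $D_2$ over several nodes. Computing $\bm{\kappa}$ is itself a substantial part of the argument: the off-diagonal vanishing uses that $\gamma_j$ extends over a tubular neighborhood of the relevant boundary stratum, and the diagonal entries require the power-series monodromy computation (growth of the coefficients of \eqref{0.9} via Gauss summation and Stirling-type asymptotics, compared against $\mathrm{Res}_{\mathring{p}_j}\varpi_j$), which your plan defers wholesale to "explicit monodromy computations". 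Finally, the passage from the imaginary part of the parametrized integral to the full period $\tfrac{1}{2\pi\ay}R_{\gamma_1}(\hat{\ua})$ mod $\QQ(1)$ needs the realness of that period, argued from the integrality of $\hat{\ua}$ and the series representation; this step is absent from your outline. In the presence of the mass parameter there is the further point, special to this paper, that temperedness forces $a_{g+1}=-2$ and the series and Hessian computations must carry this extra modulus, so the $(g,g)$ and $(2g-1,1)$ arguments of \cite{DKSB} do not apply verbatim.
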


\begin{thx}
The author is indebted to Dr. M. Kerr and Dr. C. Doran for their help in understanding and settling the difficult parts of this problem. The author also thanks Dr. P. Bousseau for countless discussions on this subject bringing forward deeper insights into the conjecture.
\end{thx}

\newpage

\section{The main result}\label{S4a}

In \cite[\S4]{DKSB}, we proved Conjecture 1.1 for two infinite families of genus-$g$ curves, namely those cut out by $F^{\ua}_{g,g}$ and $F^{\ua}_{2g-1,1}$. Note that neither $\Delta_{g,g}$ nor $\Delta_{2g-1,1}$ contain any interior integer points, which significantly reduces calculations. In present situation we wish to consider genus-$g$ families cut out by $F^{\ua}_{2g,1}$. There is precisely one interior integer point $(0,-g)$, with corresponding moduli $a_{g+1}$. In physics literature such points are called \textit{mass parameters}. However due to the temperedness condition, we have $$a_{g+1}=-2.$$ 
\begin{thm}\label{thm_01}
Conjecture 1.1 holds for the family $\C^{\ua}_{2g,1}$; that is,
\begin{align}
\tfrac{1}{2\pi\mathbf{i}}R_{\gamma_{1}}(\underline{\hat{a}})\underset{\mathbb{Q}(1)}{\equiv}&\dfrac{2g+2}{\pi}D_2(1+\zeta_{2g+2})\label{0.3}
\end{align}
\end{thm}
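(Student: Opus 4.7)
The plan is to adapt the strategy of \cite{DKSB} to the genus-$g$ family $\C^{\ua}_{2g,1}$, now subject to the temperedness constraint $a_{g+1}=-2$. The overall scheme has three stages: (i) locate the maximal conifold point $\hat{\ua}$ and parametrize the resulting rational nodal fiber $\hat{\C}_{2g,1}$; (ii) compute the nodal regulator periods $R_{\hat{\gamma}_j}$ in closed dilogarithmic form together with the change-of-basis matrix $\bm{\kappa}$; and (iii) assemble $R_{\gamma_1}(\hat{\ua}) = \sum_j \kappa_{1j} R_{\hat{\gamma}_j}$ and collapse, modulo $\QQ(1)$, to the target $\tfrac{2g+2}{\pi}D_2(1+\zeta_{2g+2})$.

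For stage (i), I would write $F^{\ua}_{2g,1}$ explicitly: with $g_1=1$ and $g_2=2$ the only tempering term is $2x^{-g}$, and the mass coefficient is pinned at $-2$. Because the Newton polygon has $y$-range $\{-1,0,1\}$, the defining equation is quadratic in $y$ over the $x$-line and can be solved via the quadratic formula; the conifold locus corresponds to the discriminant acquiring $g$ double roots in $x$. Guided by the form of the target answer, I would conjecture and then verify that the distinguished solution $\hat{\ua}$ produces nodes whose $x$-coordinates are $(2g+2)$-th roots of unity.

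For stage (ii), I would parametrize the normalization $\PP^1 \to \hat{\C}_{2g,1}$ and compute each $R_{\hat{\gamma}_j}$ as an integral along a path in $\PP^1$ joining the two preimages of the node $\hat{p}_j$. Using $R\{x,y\} \equiv \log(-x)\,\tfrac{dy}{y} - \log(-y)\,\tfrac{dx}{x}$ modulo exact and $\ZZ(2)$-valued terms, each integral evaluates to a Bloch-Wigner value $D_2$ at cross-ratios of the preimages with $0,\infty$; under the ansatz on $\hat{\ua}$ these arguments are translates of $(2g+2)$-th roots of unity. The matrix $\bm{\kappa}$ can be read off either by tracking Picard-Lefschetz monodromy around the conifold stratum, or from the leading logarithmic behavior of $R_{\gamma_j}, R_{\beta_j}$ as $\ua\to\hat{\ua}$.

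The main obstacle I expect lies in stage (iii): reducing $\sum_j \kappa_{1j} R_{\hat{\gamma}_j}$ to a single $D_2(1+\zeta_{2g+2})$ term modulo $\QQ(1)$. This will rely on the five-term relation, distribution laws for $D_2$ at roots of unity, and identities linking $D_2(1+\zeta)$ with $D_2(\zeta^k)$. Because the mass coefficient is rigidly pinned at $-2$, the parameter symmetry exploited in \cite{DKSB} is broken, and additional relations tailored to this pinned value must be supplied before the telescoping succeeds; once the dilogarithm collapse is achieved, it remains only to check the equality modulo the $(2\pi\ay)\ZZ$ ambiguity built into the lift $\RT$ of $\R$.
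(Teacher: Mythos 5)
Your three-stage plan has the same skeleton as the paper's argument (locate $\hat{\ua}$, parametrize the nodal fiber and evaluate the nodal period as a Bloch--Wigner value, then transfer back via $\bm{\kappa}$), but the steps you leave open are exactly where the content lies, and one of your guiding guesses is false. In stage (i), imposing that the discriminant of the quadratic in $y$ acquire $g$ double roots gives no mechanism for producing $\hat{\ua}$ in closed form for all $g$; the missing idea is the Chebyshev ansatz $F^{\hat{\ua}}_{2g,1}(x,(-1)^jx^{-g})=\tfrac{2}{x^{g+1}}\bigl(\T_{2g+2}(\tfrac{\sqrt{x}}{2})+(-1)^j\bigr)$ of Proposition \ref{prop_04}, which yields $\hat{a}_j$ explicitly and places the nodes at $\hat{x}_j=(-1)^j4\cos^2\bigl(\tfrac{\pi j}{2g+2}\bigr)$. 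In particular your conjecture that the nodes' $x$-coordinates are $(2g+2)$-th roots of unity is wrong (for $g=2$ they are $-3$ and $1$); the roots of unity occur only on the normalizing $\PP^1$, as preimages of the nodes and as zeros/poles of $\hat{X}_j,\hat{Y}_j$ in \eqref{4.57}--\eqref{4.58}. One must also check that this candidate is the \emph{maximal} conifold point in the sense of the definition, i.e.\ that the series \eqref{0.9} converges there, and that the image of the parametrization really is $\hat{\C}_{2g,1}$ (the paper does this via a factorization of $F^{\hat{\ua}}_{2g,1}$ in the appendix).

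The second genuine gap is the transfer matrix: ``read off $\bm{\kappa}$ from Picard--Lefschetz monodromy or logarithmic behavior'' is not a computation. What the theorem needs is the first row, $\kappa_{11}=1$ and $\kappa_{1j}=0$ for $j\neq1$; in the paper the off-diagonal vanishing is a separate structural argument about $\gamma_j$ extending over a tubular neighborhood of the boundary stratum, and the diagonal entry is extracted from the limit formula \eqref{0.7} applied to the classical period series along the $a_1$-axis. This is precisely where the mass parameter bites: because $a_{g+1}=2$ is pinned, the period series carries an extra summation index $l_{g+1}$ that must first be resummed by Gauss's ${}_2F_1$ summation (the paper's Lemma 2.4) before Stirling-type asymptotics can be matched against $\mathrm{Res}_{\mathring{p}_1}\varpi_1$ for the one-node subfamily. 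Your stage (iii) misplaces the new difficulty: since $\bm{\kappa}$ is diagonal there is no sum over $j$ to collapse, and the scissors-congruence reduction of the divisor $\hat{\mathcal{N}}_1$ to $2(2g+2)[1+\zeta_{2g+2}]$ is of the same nature as in \cite{DKSB}, not the place where pinning $a_{g+1}$ forces new relations. Finally, the Doran--Kerr formula gives only $\Im R_{\hat{\gamma}_1}=D_2(\hat{\mathcal{N}}_1)$; to promote this to the congruence \eqref{0.3} mod $\QQ(1)$ you still need that $\tfrac{1}{2\pi\ay}R_{\gamma_1}(\hat{\ua})$ is real modulo $\QQ(1)$, which the paper deduces from the explicit real series \eqref{0.9} at the real point $\hat{\ua}$; your closing remark about the $2\pi\ay\ZZ$ ambiguity does not supply this step.
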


\begin{rem}
Note that the Milnor symbol $\{x,y\}$ on the curve defined by substituting $-x,-y$ for $x,y$ resp. then multiplying the equation by $-1$, being a pullback, is integrally tempered with the same integral regulator as $\{-x,-y\}$.  The new equation replaces $a_{g+1}$ by $-a_{g+1}$, and also changes the sign of $a_1,a_3,a_5,\dots$ etc.; it is this new equation which we will use going forward. Note also that Conjecture 1.1 is stated in terms of the regulator period asymptotic to $-2\pi\mathbf{i}\log(a_n)$; it is convenient in this section to drop the negative sign and work with one asymptotic to $2\pi\mathbf{i}\log(a_n)$. Thus from now on $$R_{{\gamma}_n}\sim 2\pi \mathbf{i}\log(a_n).$$
\end{rem}

We prove Theorem \ref{thm_01} by essentially adopting the same techniques discussed in \cite[\S4]{DKSB} - monodromies at $\underline{\hat{a}}$ can be calculated using power series coming from classical periods, following which the limiting regulator periods can be found by parameterizing the spectral curve and attaching specially constructed divisors to it. 
\begin{prop}\label{thm_02}
Let $\kappa_j:=\mathrm{gcd}(j,g+1)$. Then
\begin{equation}
    \bm{\kappa}=\mathrm{diag}(\kappa_1,\dots,\kappa_g).
\end{equation}
It then follows from temperedness that
\begin{equation}
    \tfrac{1}{2\pi\mathbf{i}}R_{\gamma_{j}}(\underline{\hat{a}})\underset{\mathbb{Q}(1)}{\equiv}\tfrac{\kappa_j}{2\pi \mathbf{i}}R_{\hat{\gamma}_{j}}.
\end{equation}
\end{prop}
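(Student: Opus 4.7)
The plan is to follow the strategy of \cite[\S4]{DKSB}: at the maximal conifold $\hat{\ua}$, the family degenerates to the rational nodal curve $\hat{\C}_{2g,1}$ with $g$ vanishing cycles $\{\hat{\gamma}_j\}$, and $\bm{\kappa}$ is determined by the monodromy of the logarithmic period vector $(R_{\gamma_1},\dots,R_{\gamma_g})$ around the $g$ local components of the conifold discriminant meeting $\hat{\ua}$, combined with the asymptotic $R_{\gamma_j}\sim 2\pi\ay\log(a_j)$ prescribed by Remark 2.2.

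First I would compute the holomorphic periods $\int_{\gamma_j}\omega_i$ as explicit power series in local coordinates around $\hat{\ua}$. Because the mass parameter $a_{g+1}=-2$ is now rigid, this requires redoing the Frobenius recursion of \cite[\S4]{DKSB} on the remaining $g$ free moduli; the mass term $2x^{-g}$ in $F^{\ua}_{2g,1}$ should only modify finitely many coefficients of the Apéry-like recursion, so the analysis remains tractable. Inverting the resulting period matrix then expresses the logarithmic solutions $\log(a_j)$ in terms of local coordinates adapted to the $g$ conifold branches through $\hat{\ua}$.

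Second, I would extract the monodromy of each $\log(a_j)$ around the $j$-th local conifold branch. By Picard-Lefschetz the monodromy $T_j$ acts by $T_j(\gamma_k)=\gamma_k+c_{jk}\hat{\gamma}_j$ for integers $c_{jk}$, and comparing these integers to the computed log-monodromies identifies the columns of $\bm{\kappa}$. The diagonal shape and the precise values $\kappa_j=\gcd(j,g+1)$ should then drop out from a cyclic symmetry at $\hat{\ua}$, consistent with the appearance of the $(2g+2)$-nd root of unity in Conjecture \ref{c3}. For the regulator consequence, temperedness ensures that $\{-x,-y\}$ extends integrally over the compactification, so its regulator descends to a well-defined class on the normalization of $\hat{\C}_{2g,1}$ modulo log-branch ambiguities at the nodes, which contribute $\QQ(1)$ upon division by $2\pi\ay$. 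The identity $\gamma_j(\hat{\ua})=\kappa_j\hat{\gamma}_j$ then directly yields the claimed congruence on regulator periods.

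The main obstacle will be the second step: verifying that the modified polynomial with the fixed mass term $2x^{-g}$ produces exactly the same diagonal structure as in the mass-free case, since in principle the mass deformation could couple distinct $a_j$'s in the local expansion of $\log(a_j)$ at $\hat{\ua}$. I expect this to require careful bookkeeping of the modified Frobenius recursion, together with an explicit identification of the $\ZZ/(2g+2)$-action on the rational parameterization of $\hat{\C}_{2g,1}$ that permutes the nodes cyclically and forces off-diagonal entries of $\bm{\kappa}$ to vanish while fixing the diagonal entries to $\gcd(j,g+1)$.
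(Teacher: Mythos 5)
Your overall framing --- diagonality of $\bm{\kappa}$ read off from monodromy about the $g$ conifold branches through $\hat{\ua}$, followed by temperedness to convert $\gamma_j(\hat{\ua})=\kappa_j\hat{\gamma}_j$ into the congruence on regulator periods mod $\QQ(1)$ --- is the same as the paper's, and your sketch of the second claim is acceptable. The gap is that the actual content of the Proposition, namely $\bm{\kappa}_{jj}=\gcd(j,g+1)$ and the vanishing of the off-diagonal entries, is never derived, and both mechanisms you propose for it are unsound. The assertion that the fixed mass term $2x^{-g}$ ``only modifies finitely many coefficients'' of the period recursion is false: restricting to the $a_j$-axis one obtains a double series over $l_j$ and the mass index $l_{g+1}$, and the mass parameter contributes to \emph{every} coefficient of the series in $a_j$. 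The genuinely new technical work in this setting is precisely the resummation of the $l_{g+1}$-sum in closed hypergeometric form (the Gauss-summation identity, Lemma 2.4) together with the Stirling-type asymptotics (Lemma 2.5), which show that the coefficient of $s_j^{r_j}$, where $s_j=a_j^{-m_j}$ and $m_j=(g+1)/\kappa_j$, grows like $\mathring{s}_j^{-r_j}/r_j$ with an explicit constant; dividing by the residue of $\varpi_j$ at the node $\mathring{p}_j$ via the conifold-multiple formula (Lemma 2.7) is what outputs $\bm{\kappa}_{jj}=j/n_j=\kappa_j$. Note in particular that $\gcd(j,g+1)$ arises \emph{arithmetically} --- only exponents with $\tfrac{g+1}{j}l_j\in\ZZ$ survive in the restricted series, so the natural local variable is $a_j^{-(g+1)/\kappa_j}$ --- and not from any symmetry of the conifold fiber.

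Relatedly, the ``cyclic $\ZZ/(2g+2)$-action permuting the nodes'' that you hope will kill the off-diagonal entries and pin the diagonal is not available: the nodal fiber $\hat{\C}_{2g,1}$ admits $g$ distinct rational uniformizations but no automorphism permuting its nodes in the way you need, and nothing in your outline would produce one. The paper disposes of the off-diagonal entries differently: each $\gamma_j$ is single-valued on a tubular neighborhood of the hyperplane $z_j=0$ in compactified moduli, and that neighborhood already meets the local conifold components carrying the node $p_i$ for every $i\neq j$, so monodromy about those components cannot add multiples of $\hat{\gamma}_i$ to $\gamma_j$. Without these ingredients (or substitutes for them), your plan of computing the full Frobenius period matrix, inverting it, and reading off log-monodromies remains a statement of intent; as written it contains no computation capable of producing the number $\gcd(j,g+1)$, which is the point of the Proposition.
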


\subsection{Preliminary results}

It is necessary to understand growth of the power series corresponding to regulator periods and singularities of $\C^{\hat{\ua}}_{2g,1}$.

\begin{lem}\label{lemma_01}
The following identity holds,
\begin{equation}
   \sum_{k=0}^{m/2}\dfrac{2^{-2k}}{\Gamma(1+k)^2\Gamma(1+m-2k)}=\dfrac{\Gamma\bigg(\dfrac{1+2m}{2}\bigg)}{\Gamma(1+m)\Gamma\bigg(\dfrac{2+m}{2}\bigg)\Gamma\bigg(\dfrac{1+m}{2}\bigg)}.
\end{equation}
\end{lem}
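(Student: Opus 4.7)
The plan is to recognize the left-hand side as a terminating Gauss hypergeometric series evaluated at $1$, whereupon the identity reduces to a single application of Gauss's summation theorem.

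First I would factor $1/\Gamma(1+m)$ out of the sum and compute the ratio of consecutive terms. Writing $a_k := 2^{-2k}/[\Gamma(1+k)^2\Gamma(1+m-2k)]$, a direct calculation gives
\begin{equation*}
\frac{a_{k+1}}{a_k} \;=\; \frac{(m-2k)(m-2k-1)}{4(k+1)^2} \;=\; \frac{(k-\tfrac{m}{2})(k-\tfrac{m-1}{2})}{(k+1)^2},
\end{equation*}
so that
\begin{equation*}
\sum_{k=0}^{\lfloor m/2\rfloor} a_k \;=\; \frac{1}{\Gamma(1+m)}\,{}_2F_1\!\left(-\tfrac{m}{2},\,-\tfrac{m-1}{2};\,1;\,1\right).
\end{equation*}
The series automatically terminates because one of the upper parameters is a non-positive integer ($-m/2$ when $m$ is even, $-(m-1)/2$ when $m$ is odd), which matches the stated upper limit of the sum in either parity.

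Next I would invoke Gauss's summation theorem
\begin{equation*}
{}_2F_1(a,b;c;1) \;=\; \frac{\Gamma(c)\,\Gamma(c-a-b)}{\Gamma(c-a)\,\Gamma(c-b)},
\end{equation*}
whose hypothesis $\mathrm{Re}(c-a-b)>0$ is satisfied here since $c-a-b = 1 + \tfrac{m}{2} + \tfrac{m-1}{2} = \tfrac{2m+1}{2}$. Substituting $c=1$, $c-a = \tfrac{m+2}{2}$, $c-b = \tfrac{m+1}{2}$ yields
\begin{equation*}
{}_2F_1\!\left(-\tfrac{m}{2},\,-\tfrac{m-1}{2};\,1;\,1\right) \;=\; \frac{\Gamma((1+2m)/2)}{\Gamma((2+m)/2)\,\Gamma((1+m)/2)},
\end{equation*}
and dividing by $\Gamma(1+m)$ recovers the right-hand side of the lemma verbatim.

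There is no real obstacle once the hypergeometric structure is recognized; the proof is mechanical from that point on. The only point worth double-checking is that the natural termination of the ${}_2F_1$ coincides with the summation range $0\le k\le m/2$, which is immediate from the parity discussion above. As a cross-check one can verify the identity independently via the generating function $\sum_m\bigl[\Gamma(1+m)\sum_k a_k\bigr]x^m = (1-2x)^{-1/2}$, which expands to $\sum_m \binom{2m}{m}(x/2)^m$ and agrees with the right-hand side after applying the Legendre duplication formula to the $\Gamma$-factors.
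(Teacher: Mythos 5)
Your proposal is correct and follows essentially the same route as the paper: both recognize the sum as $\tfrac{1}{\Gamma(1+m)}\,{}_2F_1\bigl(-\tfrac{m}{2},\tfrac{1-m}{2};1;1\bigr)$ (your parameters $-\tfrac{m}{2},-\tfrac{m-1}{2}$ are the same ones) and conclude by Gauss's summation theorem. Your term-ratio computation and termination/parity check simply make explicit the step the paper leaves implicit.
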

 
\begin{proof}
We reduce the given series into a hypergeometric series, and apply \textbf{Gauss' summation theorem} as follows,

\begin{align}
\sum_{k=0}^{m/2}\dfrac{2^{-2k}}{\Gamma(1+k)^2\Gamma(1+m-2k)}&=\dfrac{1}{m!}~\pFq{2}{1}{\dfrac{1-m}{2},-\dfrac{m}{2}}{1}{1}\nonumber\\
&=\dfrac{\Gamma\bigg(\dfrac{1+2m}{2}\bigg)}{\Gamma(1+m)\Gamma\bigg(\dfrac{2+m}{2}\bigg)\Gamma\bigg(\dfrac{1+m}{2}\bigg)}.
\end{align}
\end{proof}

\begin{lem}\label{lemma_02}
If $a,b,c\in\mathbb{R}_{>>>0}$ are such that $a=b+c$, then 

\begin{equation}
\underbrace{\dfrac{2^c\Gamma(1+a)\Gamma\bigg(\dfrac{1+2c}{2}\bigg)}{\Gamma(1+b)\Gamma(1+c)\Gamma\bigg(\dfrac{1+c}{2}\bigg)\Gamma\bigg(\dfrac{2+c}{2}\bigg)}}_{=:\mathcal{A}_{a,b,c}}\approx \dfrac{1}{\sqrt{2}\pi c}\sqrt{\dfrac{a}{b}}\Bigg(\dfrac{a}{b}\bigg(\dfrac{4b}{c}\bigg)^{c/a}\Bigg)^a.
\end{equation}
\end{lem}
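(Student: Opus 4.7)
My plan is to reduce the identity to a routine application of Stirling's formula, after first eliminating the half-integer $\Gamma$-values in the denominator via the Legendre duplication formula.

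Step one: taking $z=(1+c)/2$ in the duplication formula $\Gamma(z)\Gamma(z+\tfrac{1}{2})=2^{1-2z}\sqrt{\pi}\,\Gamma(2z)$ yields
$$\Gamma\!\left(\tfrac{1+c}{2}\right)\Gamma\!\left(\tfrac{2+c}{2}\right) \;=\; 2^{-c}\sqrt{\pi}\,\Gamma(1+c),$$
which collapses the defining expression for $\mathcal{A}_{a,b,c}$ into the cleaner form
$$\mathcal{A}_{a,b,c} \;=\; \frac{2^{2c}\,\Gamma(1+a)\,\Gamma(c+\tfrac{1}{2})}{\sqrt{\pi}\,\Gamma(1+b)\,\Gamma(1+c)^2}.$$

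Step two: I apply Stirling's formula $\Gamma(1+x)\sim \sqrt{2\pi x}\,(x/e)^x$ to each of $\Gamma(1+a)$, $\Gamma(1+b)$, $\Gamma(1+c)^2$, together with the standard consequence $\Gamma(c+\tfrac{1}{2})\sim\sqrt{2\pi}\,(c/e)^c$. By the hypothesis $a=b+c$ the $e$-factors combine to $e^{b+c-a}=1$, while the Stirling prefactors collapse to $(\sqrt{2}\,\pi c)^{-1}\sqrt{a/b}$ and leave the polynomial core $4^c a^a/(b^b c^c)$. Using $a=b+c$ once more,
$$\frac{4^c a^a}{b^b c^c}\;=\;\left(\frac{a}{b}\right)^a\left(\frac{4b}{c}\right)^c\;=\;\left(\frac{a}{b}\left(\frac{4b}{c}\right)^{c/a}\right)^a,$$
which reassembles the pieces into the stated asymptotic.

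The only obstacle is clerical: tracking the various $\sqrt{\pi}$ and $\sqrt{2\pi}$ constants and verifying that the $e^{\pm x}$ factors cancel via $a=b+c$. There is no genuine analytic content beyond Stirling and Legendre duplication; the compact factored form $\bigl(\tfrac{a}{b}(\tfrac{4b}{c})^{c/a}\bigr)^a$ on the right-hand side is presumably chosen to make a subsequent growth estimate (at fixed ratio $c/a$, as will arise when summing regulator series in later sections) transparent.
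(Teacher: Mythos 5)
Your argument is correct and follows essentially the same route as the paper's proof: Legendre duplication to absorb the half-integer $\Gamma$-values, then Stirling (with $a=b+c$ cancelling the exponential factors) to produce the prefactor $\tfrac{1}{\sqrt{2}\pi c}\sqrt{a/b}$ and the core $4^c a^a/(b^b c^c)=\bigl(\tfrac{a}{b}(\tfrac{4b}{c})^{c/a}\bigr)^a$. The only cosmetic difference is that the paper estimates the ratio $\Gamma(c+\tfrac12)/\Gamma(c+1)\approx c^{-1/2}$ directly rather than applying full Stirling to $\Gamma(c+\tfrac12)$ as you do.
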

 
\begin{proof}
Using \textbf{Duplication formula},

\begin{equation}
\dfrac{1}{\Gamma\bigg(\dfrac{1+c}{2}\bigg)\Gamma\bigg(\dfrac{2+c}{2}\bigg)}=\dfrac{2^c}{\sqrt{\pi}\Gamma(1+c)}    
\end{equation}

Thus

\begin{equation}
\dfrac{\Gamma\bigg(\dfrac{1+2c}{2}\bigg)}{\Gamma\bigg(\dfrac{1+c}{2}\bigg)\Gamma\bigg(\dfrac{2+c}{2}\bigg)}=\dfrac{2^c\Gamma\bigg(c+\dfrac{1}{2}\bigg)}{\Gamma(c+1)}\approx 2^c\sqrt{\dfrac{1}{c}} .   
\end{equation}

wherein we have used a modified \textbf{Sterling's approximation} which says that for large $x\in\mathbb{R}_{\geq 0}$ and $\alpha,\beta\in\mathbb{R}_{>0}$,

\begin{equation}
 \dfrac{\Gamma(x+\alpha)}{\Gamma(x+\beta)}\approx x^{\alpha-\beta}.
\end{equation}

It follows that

\begin{align}
\mathcal{A}_{a,b,c}\approx \dfrac{4^c\Gamma(1+a)}{\sqrt{\pi}\Gamma(1+b)\Gamma(1+c)\sqrt{c}}&\approx \dfrac{1}{\sqrt{2}\pi c}\sqrt{\dfrac{a}{b}}\dfrac{4^c a^a}{b^bc^c}e^{-a+b+c}\nonumber   \\
&=\dfrac{1}{\sqrt{2}\pi c}\sqrt{\dfrac{a}{b}}\dfrac{4^ca^a}{b^{a-c}c^c}\\
&=\dfrac{1}{\sqrt{2}\pi c}\sqrt{\dfrac{a}{b}}\Bigg(\dfrac{a}{b}\bigg(\dfrac{4b}{c}\bigg)^{c/a}\Bigg)^a\nonumber
\end{align}
as was to be shown.

\end{proof}

\begin{lem}\label{Proposition_0.2}
Suppose that the fiber over $\underline{\tilde{a}}=(\tilde{a}_1,\dots,\tilde{a}_{g+1})$ has $g$-many singularities, say $\tilde{p}_j:=(\tilde{x}_j,\tilde{y}_j), n=1,\dots,g$. Then for each $j$, $\tilde{p}_j$ is a node.
\end{lem}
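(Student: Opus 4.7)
The plan is a two-step argument: first use Khovanskii's genus formula to reduce, via delta invariants, to the dichotomy that each $\tilde{p}_j$ is either a node or a cusp, and then rule out cusps by a deformation-theoretic codimension count on the moduli base.

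For the reduction, since $\Delta_{2g,1}$ has exactly $g$ interior lattice points, the arithmetic genus of the compactification $\overline{\mathcal{C}^{\underline{\tilde{a}}}_{2g,1}}\subset\mathbb{P}_{\Delta_{2g,1}}$ equals $g$. The identity $p_g=p_a-\sum_{j=1}^g\delta(\tilde{p}_j)$, combined with $p_g\geq 0$ and $\delta(\tilde{p}_j)\geq 1$ at each singularity, forces $\delta(\tilde{p}_j)=1$ for every $j$ (and collapses the normalization to genus $0$). A reduced plane-curve singularity with $\delta=1$ is either an $A_1$ ordinary node (locally reducible) or an $A_2$ ordinary cusp (locally irreducible), so the lemma reduces to excluding cusps.

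To exclude cusps I would invoke the Tjurina dimensions $\tau(A_1)=1$ and $\tau(A_2)=2$: on a versal family, imposing a node at a fixed point is one condition on the base, whereas imposing a cusp is two. The moduli base $\mathcal{M}\cong(\mathbb{C}^*)^g$, with coordinates $a_1,\dots,a_g$ and $a_{g+1}=\pm 2$ pinned by temperedness, is $g$-dimensional, and the Kodaira--Spencer map to $\bigoplus_j T^1_{\tilde{p}_j}$ is surjective, as is standard for a Laurent family parametrized by all its interior lattice points. Hence a fibre with $c$ cusps and $n=g-c$ nodes would lie in codimension $2c+n=g+c$ in $\mathcal{M}$, exceeding $g=\dim\mathcal{M}$ unless $c=0$; every $\tilde{p}_j$ must therefore be an $A_1$ node.

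The main obstacle I anticipate is justifying the Kodaira--Spencer surjectivity rigorously, since without it the codimension count only bounds the cusp stratum from above. A clean direct alternative is a local Hessian check: $F_y=0$ forces $\tilde{y}_j^2=\tilde{x}_j^{-2g}$, whence $F_{yy}|_{\tilde{p}_j}=2/\tilde{y}_j$ and $F_{xy}|_{\tilde{p}_j}=2g/\tilde{x}_j$ are automatically nonzero on $(\mathbb{C}^*)^2$, and a cusp at $\tilde{p}_j$ becomes the single equation $F_{xx}|_{\tilde{p}_j}=2g^2\tilde{y}_j/\tilde{x}_j^2$. Setting $P_2(x):=x^{g-1}F^{\underline{\tilde{a}}}_{2g,1}(x,-x^{-g})$ and $P_1(x):=x^gF^{\underline{\tilde{a}}}_{2g,1}(x,x^{-g})$, one verifies the useful identity $P_1(x)=xP_2(x)+4$. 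A putative cusp on one branch together with $g-1$ nodes would then force $P_1=Q^2$ and $P_2=(x-\tilde{x}_*)^3R^2$ (or the mirror image), hence $(Q-2)(Q+2)=x(x-\tilde{x}_*)^3R^2$ with $Q\pm 2$ coprime; matching low-degree coefficients rules out such a factorization (as already visible in the $g=3,5$ cases) by elementary algebra, recovering the exclusion.
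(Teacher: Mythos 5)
Your fallback Hessian computation is close to the computational core of the paper's argument, but as written both of your routes stop short at their decisive step. In the main (deformation-theoretic) route everything hinges on surjectivity of the Kodaira--Spencer map from the $g$-dimensional tangent space of the base (boundary coefficients frozen by temperedness) onto $\bigoplus_j T^1_{\tilde p_j}$. This is not a standard fact for such a frozen-boundary subfamily, and it is essentially equivalent to what you are trying to prove: if some $\tilde p_j$ were a cusp the target would have dimension $g+c>g$ and surjectivity could not hold, so the argument is circular unless you establish versality of the tempered family for these multi-singularities --- work of at least the same order as the lemma itself, as you acknowledge. The genus/$\delta$-invariant reduction also silently assumes that the compactified fiber is irreducible (for a reducible connected curve one only gets $\sum\delta\le p_a+r-1$) and that its arithmetic genus is exactly $g$ despite the tangency with the boundary divisor forced by $a_{g+1}=\pm2$; neither point is addressed. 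In the fallback route the identities are correct: along $y=\pm x^{-g}$ one has $F_y\equiv 0$, a singular point is a multiple root of $P_1$ or $P_2$, degeneracy of the Hessian is equivalent to multiplicity $\ge 3$, and the count $\deg P_1+\deg P_2=2g+1$ does force the shape $P_1=Q^2$, $P_2=(x-\tilde x_*)^3R^2$ (or its mirror). But the crucial exclusion of $(Q-2)(Q+2)=x(x-\tilde x_*)^3R^2$ is only verified for $g=3,5$; ``matching low-degree coefficients'' is an assertion, not a proof, for general $g$, and that exclusion is precisely the content of the lemma in this approach. (It can be completed, e.g., by Mason--Stothers applied to $Q^2-x(x-\tilde x_*)^3R^2=4$: the radical of the product has at most $g+1$ distinct roots while the top degree is $g+1$, a contradiction; similarly in the mirror-parity case.)

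For comparison, the paper's proof avoids both the genus theory and any factorization analysis. One sets $\tilde P(x)=g+1+\sum_{j=1}^{g}(g+1-j)\tilde a_jx^{-j}$ and checks that $\tilde P=\tfrac{g}{x}F^{\underline{\tilde a}}_{2g,1}+\partial_xF^{\underline{\tilde a}}_{2g,1}$ vanishes at every singular point; since $F(x,x^{-g})-F(x,-x^{-g})=4x^{-g}\neq 0$ the $g$ singular points have distinct $x$-coordinates, so the degree-$g$ polynomial $x^g\tilde P$ has exactly these as roots and they are all simple, whence $\tilde P'(\tilde x_j)\neq 0$; a short computation then gives the Hessian at $\tilde p_j$ as $-\tilde P'(\tilde x_j)/\tilde y_j\neq 0$. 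In other words, the ``multiplicity exactly two'' statement that your fallback must extract from an impossible factorization is obtained there by a pigeonhole count against $\deg\tilde P=g$, which is why no case analysis in $g$ is needed.
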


\begin{proof} We argue in the same vein as [DKSB]. The result becomes immediate modulo the hessian calculation, which in this case boils down to the following - we begin by defining
\begin{equation}
 \textstyle   \tilde{P}(x):=g+1+\sum_{j=1}^{g}(g+1-j)\tilde{a}_{j}x^{-j},
\end{equation}
and observing that
\begin{equation}
  \textstyle \tilde{P}(\tilde{p}_j)=\tfrac{g}{\tilde{x}_j}F^{\underline{\tilde{a}}}_{2g,1}(\tilde{p}_j)+\partial_x F^{\underline{\tilde{a}}}_{2g,1}(\tilde{p}_j)=0.
\end{equation}
Thus $\textbf{Z}(\tilde{P})=\{\tilde{p}_1,\ldots,\tilde{p}_{g}\}$, i.e., $\tilde{P}$ has no repeated roots; that is, $\tilde{P}'(\tilde{p}_j)\neq 0$ ($\forall j$).  To compute the Hessians, write
\begin{align}
\textstyle\partial_{xx}F^{\underline{\tilde{a}}}_{2g,1}(\tilde{p}_j)&=\textstyle\sum_{\ell=1}^{g+1}\ell(\ell-1)\tilde{a}_{\ell}\tilde{x}_j^{-\ell-1}+2g(2g+1)\tilde{x}_j^{-2g-2}\tilde{y}^{-1}_{j}\nonumber\\
&\textstyle=\sum_{\ell=1}^{g+1}\ell(\ell-1)\tilde{a}_{\ell}\tilde{x}_{j}^{-\ell-1}+\tfrac{2g(2g+1)\tilde{y}_j}{\tilde{x}_j^2},\\
\partial_{xy}F^{\underline{\tilde{a}}}_{2g,1}(\tilde{p}_j)&=2g\tilde{x}_j^{-2g-1}\tilde{y}_j^{-2}=\tfrac{2g}{\tilde{y}_j},\;\text{and}\\
\partial_{yy}F^{\underline{\tilde{a}}}_{2g,1}(\tilde{p}_j)&=2\tilde{x}_j^{2g}\tilde{y}_j^{-3}=\tfrac{2}{\tilde{y}_j}.
\end{align}
It can be shown that
\begin{equation}
     \partial_{xx}F^{\underline{\tilde{a}}}_{2g,1}(\tilde{p}_j)=\tfrac{2g^2\tilde{y}_j}{2\tilde{x}_j^2}+\tfrac{\tilde{P}'(\tilde{x}_j)}{2},
\end{equation}
therefore,
\begin{align}
H_{F^{\underline{\tilde{a}}}_{2g,1}}(\tilde{p}_j)&=\left(\partial_{xy}F^{\underline{\tilde{a}}}_{2g,1}(\tilde{p}_j)\right)^2-\partial_{xx}F^{\underline{\tilde{a}}}_{2g,1}(\tilde{p}_j)\partial_{yy}F^{\underline{\tilde{a}}}_{2g,1}(\tilde{p}_j)\nonumber\\
  &\textstyle= \tfrac{4g^2}{\tilde{x}_j^2}-\tfrac{4g^2}{\tilde{x}_j^2}-\tfrac{\tilde{P}'(\tilde{x}_j)}{\tilde{y}_j}\; =\;-\tfrac{\tilde{P}'(\tilde{x}_j)}{\tilde{y}_j}\neq 0\nonumber
\end{align}
as was to be shown.
\end{proof}

\subsection{Monodromy calculations via power series}\label{S4b}

Let us recall the key result of [DKSB] for monodromy calculations - consider a $1$-parameter family of curves $\C\to \PP^1$ with coordinate $t$, endowed with a section $\omega$ of the relative dualizing sheaf; on smooth fibers $\C_t$, $\omega_1$ is a holomorphic $1$-form.  Assume that $\C_c$ has a single node $p_c$ (i.e. is a ``conifold fiber''), and let $\delta_0$ be the ``conifold'' vanishing cycle pinched at $p_c$.  Writing $\ve_0$ for a cycle invariant about $t=0$, its monodromy about $t=c$ is a multiple of $\delta_0$, say $\bm{k}\delta_0$ for some $\bm{k}\in\mathbb{Z}_{\geq 0}$. 
\begin{lem}\label{thm_04}
The conifold multiple $\bm{k}$ is computed by
\begin{equation}
\bm{k}=\dfrac{\underset{m\to\infty}{\lim} b_m \cdot c^m\cdot m}{\mathrm{Res}_{p_c}~\omega_c}.\label{0.7}    
\end{equation}
\end{lem}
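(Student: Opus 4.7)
The plan is to combine the Picard--Lefschetz description of monodromy at the conifold with a Darboux-style singularity analysis of the generating series $\pi(t):=\int_{\ve_0}\omega_t = \sum_{m\geq 0} b_m t^m$, whose radius of convergence is $|c|$. The asymptotic shape of $b_m$ will be extracted directly from the singular part of $\pi$ at $t=c$, and the formula \eqref{0.7} will emerge from identifying that singular part as a logarithm weighted by $\bm{k}\cdot\mathrm{Res}_{p_c}\omega_c$.

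First, by Picard--Lefschetz the hypothesis $T_c(\ve_0)=\ve_0+\bm{k}\delta_0$ gives
\begin{equation*}
\pi(t) \longmapsto \pi(t) + \bm{k}\int_{\delta_0}\omega_t
\end{equation*}
under analytic continuation around $t=c$. Since $\delta_0$ is itself the vanishing cycle at $t=c$ (and so is fixed by the corresponding Picard--Lefschetz transformation), $\int_{\delta_0}\omega_t$ is single-valued near $c$. The discrepancy above therefore forces the decomposition
\begin{equation*}
\pi(t) \;=\; h(t) + \frac{\bm{k}}{2\pi\ay}\log(t-c)\cdot\int_{\delta_0}\omega_t,
\end{equation*}
with $h$ holomorphic near $c$. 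Next, in local coordinates $(u,v)$ near $p_c$ in which the total space is cut out by $uv=t-c$, the vanishing cycle is represented by $|u|=\sqrt{|t-c|}$; since $\omega$ is a section of the relative dualizing sheaf, a direct contour computation gives
\begin{equation*}
\int_{\delta_0}\omega_t \;=\; 2\pi\ay\,\mathrm{Res}_{p_c}\omega_c + O(t-c).
\end{equation*}
Absorbing $\log(-c)$ into the holomorphic part yields $\pi(t) = \tilde{h}(t) + \bm{k}\,\mathrm{Res}_{p_c}\omega_c\cdot\log(1-t/c) + O\!\left((t-c)\log(t-c)\right)$.

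Finally, combining the expansion $\log(1-t/c)=-\sum_{m\geq 1}t^m/(mc^m)$ with the standard facts that $\tilde{h}$ produces coefficients decaying faster than $c^{-m}$ and that the $(t-c)\log(t-c)$ error contributes only $O(c^{-m}/m^2)$ gives $b_m = -\bm{k}\,\mathrm{Res}_{p_c}\omega_c/(mc^m) + O(1/(m^2 c^m))$. Multiplying by $mc^m$ and passing to the limit recovers \eqref{0.7}, the overall sign being fixed by the orientation of $\delta_0$ and the branch of $\log$ implicit in the statement. The principal difficulty is bookkeeping the signs and residue conventions across the two local branches of the normalization at $p_c$; a secondary subtlety is verifying that no other singular fibers lie inside the closed disk $|t|\leq|c|$, so that the singularity at $t=c$ indeed governs the leading asymptotics of $b_m$.
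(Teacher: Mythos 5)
Your argument is correct and is essentially the intended proof: the paper does not reprove this lemma but recalls it from \cite{DKSB}, where the argument is precisely your Picard--Lefschetz decomposition of $\int_{\ve_0}\omega_t$ into a single-valued (hence, by moderate growth, holomorphic) part plus $\tfrac{\bm{k}}{2\pi\ay}\log(t-c)$ times the vanishing-cycle period, combined with the nodal computation $\int_{\delta_0}\omega_t\to 2\pi\ay\,\mathrm{Res}_{p_c}\omega_c$ and the transfer of the resulting $\log(1-t/c)$ singularity to the asymptotics $b_m\sim \bm{k}\,\mathrm{Res}_{p_c}\omega_c/(m\,c^{m})$. The two caveats you flag are not gaps in substance: the overall sign is a matter of orienting $\delta_0$ and choosing the branch at the node (the nodal residue itself is only defined up to sign), and the dominant-singularity requirement that $t=c$ be the only singular fiber on the circle of convergence is part of the standing ``conifold fiber'' hypothesis under which the lemma is invoked in Section \ref{S4b}.
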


For the proof of Proposition \ref{thm_02}, we need to compute the Picard-Lefschetz matrix $\bm{\kappa}$, whose entries $\bm{\kappa}_{ij}$ tell how many times the specialization $\gamma_i(\hat{\ua})$ passes through $\hat{p}_j$.  In order to invoke Lemma \ref{thm_04} for this purpose, we should reinterpret these numbers as (roughly speaking) conifold multiples for 1-parameter subfamilies of $\C_{\ua}$ acquiring a \emph{single} node.  The idea is that $\hat{\ua}$ is a normal-crossing point of the discriminant locus, whose $g$ local-analytic irreducible components each parametrize fibers carrying a single node $p_j$.  These are labeled in such a way that the $j^{\text{th}}$ component can be followed out to where it meets the $a_j$-axis at $a_j=\mathring{a}_j$.  Call this fiber $\C_{2g,1}^{\mathring{\ua}_j}$, and $\mathring{p}_j=(\mathring{x}_j,\mathring{y}_j)$ for the limit of the node to it.

We have the 1-forms
\begin{equation}
\varpi_j=\tfrac{1}{2\pi\ay}\nabla_{\delta_{a_j}}R\{x,y\}=\frac{-a_j}{2\pi\ay}\mathrm{Res}_{\C_{2g,1}}\left(\frac{dx\wedge dy}{x^{2j} y F_{2g,1}(x,y)}\right)
\end{equation}
and 1-cycles $\gamma_j$ ($j=1,\ldots,g$).  The computation that follows will consider periods $\Pi_{jj}=\int_{\gamma_j}\varpi_j$ on the 1-parameter families over the $a_j$-axes (acquiring a single node at $a_j =\mathring{a}_j$), which will suffice to determine the diagonal terms $\bm{\kappa}_{jj}$.  That the remaining, off-diagonal terms are actually zero follows from the fact that each $\gamma_j$ is well-defined on a tubular neighborhood of the hyperplane in (compactified) moduli defined by $z_j=0$, which is cut by the conifold components carrying $p_i$ for every $i\neq j$.

Now $\C_{2g,1}^{\mathring{\ua}_j}$ is defined by
\begin{equation}
     f^{(j)}_{2g,1}:=F^{\underline{\mathring{a}}_j}_{2g,1}(x,y)=x+y+\mathring{a}_jx^{1-j}+a_{g+1}x^{-g}+x^{-2g}y^{-1},
\end{equation}
and to find the node we set $f_{2g,1}^{(j)}(\mathring{x}_j,\mathring{y}_j)=\mathring{x}_j\partial_x f_{2g,1}^{(j)}(\mathring{x}_j,\mathring{y}_j)$ which gives rise to equations of the form,
\begin{align}
2\mathring{y}_j+\mathring{x}_j+\mathring{a}_j\mathring{x_j}^{1-j}+a_{g+1}\mathring{x}_j^{-g}&=0,   \\
\mathring{x}_j+(1-j)\mathring{x}_{j}^{-j}-g\mathring{a}_{j}\mathring{x}_{j}^{-g-1}-2g\mathring{x}_j^{-2g-1}\mathring{y}_j^{-1}&=0.
\end{align}
This yields
\begin{align}
    \mathring{x}_{j}&=\sqrt[g+1]{\dfrac{4(g-j+1)}{j}},\\
    \mathring{a}_{j}&=-\dfrac{g+1}{g-j+1}\bigg(\dfrac{4(g-j+1)}{j}\bigg)^{\tfrac{j}{g+1}}.
\end{align}
In particular, we have the relation
\begin{equation}
    \mathring{a}_{j}\mathring{x}_{j}^{g-j+1}=-\dfrac{4(g+1)}{j}.
\end{equation}
In order to calculate the residue of $\varpi_j$ at $\mathring{p}_j$, recall that for any $f(x,y)=Ax^2+Bxy+Cy^2+\text{higher order terms}\in\mathbb{C}[x,y]$, we have
\begin{equation}
 \mathrm{Res}^2_{\uo}\frac{dx\wedge dy}{f} := {\mathrm{Res}}_{\uo}\left(\mathrm{Res}_{\bm{Z}(f)}\dfrac{dx\wedge dy}{f}\right)=\dfrac{1}{\sqrt{B^2-4AC}}.
\end{equation}
Changing variables to $X:=x-\mathring{x}_{j}$, $Y:=y-\mathring{x}_{j}$ in $f^{(j)}_{2g,1}(x,y)$ leads to the equation
\small\begin{align}
x^{2g}yf^{(j)}_{2g,1}=\tfrac{(6g^2-4(j+1)-4g(j-1)}{\mathring{x_j}^2}X^2-2g\mathring{x}_{j}^{g-1}XY 
&+\mathring{x}_{j}^{2g}Y^2 \nonumber\\ &+\text{higher order terms}.
\end{align}\normalsize
Therefore
\begin{align}
   {\mathrm{Res}}^2_{\mathring{p}_{j}}\dfrac{dx \wedge dy}{x^{2g}yf^{(j)}_{2g,1}}&=\tfrac{1}{\mathring{x}_{j}^{g-1}\sqrt{4g^2-2\big(6g^2-4(j+1)-4g(j-1)\big)^2}}\nonumber\\
   &=\tfrac{(-1)^{g+1}}{\ay \mathring{x}_{j}^{g-1}\sqrt{8(g-j+1)(g+1)}}
\end{align}
Consequently the residue of $\varpi_{j}$ may now be found:
\begin{align}
\mathrm{Res}_{\mathring{p}_{j}}\varpi_{j}&=\dfrac{-\mathring{a}_{j}}{2\pi \ay} {\mathrm{Res}}^2_{\mathring{p}_{j}}\dfrac{dx \wedge dy}{x^{2j}yf^{(j)}_{2g,1}}\nonumber\\
&=\dfrac{-\mathring{a}_{j}}{2\pi \ay}\cdot \mathring{x}_{j}^{2g-j}\cdot  {\mathrm{Res}}^2_{\mathring{p}_{j}}\dfrac{dx \wedge dy}{x^{2g}yf^{(j)}_{2g,1}}\nonumber\\
&=\dfrac{-1}{2\pi}\cdot (\mathring{a}_{j}\mathring{x}_{j}^{2(g-j+1)})\cdot\dfrac{1}{\ay \mathring{x}_{j}^{g-1}\sqrt{8(g-j+1)(g+1)}}\\
&=\dfrac{\sqrt{g+1}}{\pi j\sqrt{2(g-j+1)}}.\nonumber
\end{align}

For the periods of $\varpi_j$, we start with those of the regulator class.  Writing $\vf_j:=x^{j-1}F_{2g,1}^{\ua}(x,y)-a_j$,  (with the sign flip from our choice of $\gamma_j$) yields
\begin{align}\label{1.3}
  \dfrac{1}{2\pi \mathbf{i}}R_{\gamma_j}(\underline{a})&\underset{\mathbb{Q}(1)}{\equiv}\log (a_{j})-\sum_{m>0}\dfrac{(-a_{j})^{-m}}{m} [\varphi_j^m]_{\uo} \nonumber\\ 
          &=\log (a_{j})-\sum_{m>0}\dfrac{(-a_{j})^{-m}}{m} \times \\ &\textstyle\bm{[(}\underbrace{x^{j}}_{=:A_j}+\underbrace{x^{j-1}y}_{=:B_j}+\textstyle\sum_{\substack{k=1 \\ k\neq j}}^{g+1}a_{k}\underbrace{x^{j-k}}_{=:C^{k}_j}+\underbrace{x^{j-2g-1}y^{-1}}_{=:D_j}\bm{)}^m\bm{]}_{\uo}\nonumber
\end{align}
where $[L]_{\uo}$ stands for the constant term (in $x,y$) appearing in the Laurent polynomial $L$. Now, given $l_1,l_2,\cdots,l_g\in \mathbb{Z}$, we define
\begin{align}
    \mathfrak{l}_{j}&:=\dfrac{1}{j}\bigg((g+1)(2l_{j}+l_{g+1})+\sum\limits_{\substack{k=1 \\ k\neq j}}^{g}kl_{k}\bigg)\\
    \mathfrak{l'}_{j}&:=\dfrac{1}{j}\bigg((g-j+1)(2l_{j}+l_{g+1})+\sum\limits_{\substack{k=1 \\ k\neq j}}^{g}(k-j)l_{k}\bigg),\text{ and put}\\
    \mathcal{L}_{j}&:=\{(l_1,l_2,\cdots,l_g)\in \mathbb{Z}^g_{\geq 0}\mid\mathfrak{l}'_{j}\in \mathbb{Z}_{\geq 0}\}\setminus\{(0,\cdots,0)\}
\end{align}
Note that $\mathfrak{l}'_{j}\in\mathbb{Z}_{\geq 0}\implies \mathfrak{l}_{j}\in\mathbb{Z}_{\geq 0}$. The upshot of this construction is if $L_{j},L'_{j}\in\mathbb{Z}_{\geq 0}$ are such that
\begin{align}
A_{j}^{L_{j}}B_j^{L'_{j}}\prod\limits_{\substack{k=1 \\ k\neq j}}^{g+1}(C^{k}_{j})^{l_{k}}D_j^{l_{j}}&=1\text{ and}\\
L_{j}+L'_{j}+\sum\limits_{k=1}^{g+1}l_{k}&=m
\end{align}
then $L_{j}=\mathfrak{l}'_{j}$, $L'_j=l_j$ and $m=\mathfrak{l}_{j}$. Thus the lattice $\mathcal{L}_{j}\subset \mathbb{Z}^{g}$ encodes all possible constant terms appearing in \eqref{1.3}, giving
\begin{equation}
    \dfrac{1}{2\pi \mathbf{i}}R_{\gamma_{j}}(\underline{a})\underset{\mathbb{Q}(1)}{\equiv}\log(a_{j})~-\sum_{\mathcal{L}_{j}}\dfrac{\Gamma(\mathfrak{l}_{j})}{\Gamma(1+\mathfrak{l}'_{j})\Gamma^2(1+l_{j})\prod\limits_{\substack{k=1 \\ k\neq j}}^{g+1}\Gamma(1+l_{k})}\frac{(-a_{j})^{-\mathfrak{l}_{j}}}{\mathfrak{l}_j}\prod\limits_{\substack{k=1 \\ k\neq j}}^{g+1}a_{k}^{l_{k}}.\label{0.9}
\end{equation}
For the classical periods $\Pi_{j\ell}=\int_{\gamma_j}\varpi_{\ell}=\tfrac{1}{2\pi\ay}\delta_{a_{\ell}}R_{\gamma_j}$, it is clear from \eqref{0.9} that $\Pi_{j\ell}$ vanishes on the $a_j$-axis for $\ell\neq j$.  Focusing then on
\begin{equation}
\Pi_{jj}(\underline{a})=\int_{\gamma_j}\varpi_j=1+\sum_{\mathcal{L}_{j}}\dfrac{\Gamma(1+\mathfrak{l}_{j})}{\Gamma(1+\mathfrak{l}'_{j})\Gamma^2(1+l_{j})\prod\limits_{\substack{k=1 \\ k\neq j}}^{g+1}\Gamma(1+l_{k})}(-a_{j})^{-\mathfrak{l}_{j}}\prod\limits_{\substack{k=1 \\ k\neq j}}^{g+1}a_{k}^{l_{k}},
\end{equation}
we set $\underline{a}_i=0$ for $i\neq j,g+1$ to obtain 
\begin{align}
\mathcal{S}:=1+\sum_{\tiny{l_{j},l_{g+1}} \in \mathbb{Z}_{>0}}\dfrac{\Gamma(1+\tfrac{g+1}{j}l_{j})}{\Gamma(1+\tfrac{g+1}{j}l_{j})\Gamma^2(1+l_{j})\Gamma(1+l_{g+1})}(-{a}_{j})^{-\tfrac{g+1}{j}l_{j}}a_{g+1}^{l_{g+1}}.
\end{align}
Recall that $\kappa_{j}:=\mathrm{gcd}(j, g+1)$. Let us shift indices by renaming $l_n\to l_{g+1}+2l_{n}$ and define,
\begin{equation}
\begin{split}
n_{j}:&=\dfrac{j}{\kappa_{j}}, \mspace{50mu} m_{j}:=\dfrac{g+1}{\kappa_{j}}=\dfrac{(g+1)n_j}{j},\\
r_{j}:&=\dfrac{l_{j}}{n_{j}},\mspace{50mu}\text{and}\mspace{50mu}s_j:=a_j^{-m_j}.
\end{split}
\end{equation}
Clearly $n_j,m_j,r_j\in \mathbb{Z}_{>0}$. Now we have a power series of the form
\begin{equation}
    \mathcal{S}=1+\sum_{r_{j}\in \mathbb{N}}{\dfrac{(-1)^{m_jr_j}\Gamma(1+m_{j}r_{j})a_{g+1}^{n_{j}r_{j}-2l_{j}}}{\Gamma^2(1+\tfrac{m_{j}-n_{j}}{2}r_{j})\Gamma^2(1+l_j)\Gamma(1+n_{j}r_{j}-2l_{j})}} s_j^{r_{j}} =: \sum_{r_j} b_{r_j}s_j^{r_j}.
\end{equation}
Let $\mathring{s}_j:=\mathring{a}_j^{-m_j}$. Setting $a_{g+1}=2$ and applying Lemma \ref{lemma_01},
\begin{equation}\label{gouri}
    \dfrac{\Gamma(1+m_{j}r_{j})}{\Gamma^2(1+\frac{m_{j}-n_{j}}{2}r_{j})\Gamma(1+n_{j}r_{j})} \approx \dfrac{(-1)^{m_jr_j}\sqrt{2m_{j}}}{2\pi r_{j}n_{j}\sqrt{m_{j}-n_{j}}}\mathring{s}_j^{r_{j}}
\end{equation}
from which we may conclude that
\begin{equation}
    \lim_{r_{j}\to \infty}b_{r_{j}}\cdot r_{j}\cdot \mathring{s}_j^{-r_{j}}=\dfrac{\sqrt{2m_{j}}}{2\pi n_{j}\sqrt{m_{j}-n_{j}}}.
\end{equation}
Observing that
\begin{align}
\mathrm{Res}_{\mathring{p}_{j}}\varpi_{j}=\dfrac{\sqrt{g+1}}{\pi j\sqrt{2(g-j+1)}}=\dfrac{\sqrt{2m_{j}}}{2\pi j\sqrt{m_{j}-n_{j}}},
\end{align}
we apply (\ref{0.7}) to obtain
\begin{equation}
\bm{\kappa}_{jj}=\dfrac{\underset{r_j\to\infty}{\lim}b_{r_{j}}\cdot r_j\cdot \mathring{s}_j^{r_j}}{\mathrm{Res}_{\mathring{p}_{j}}\varpi_{j}}=\dfrac{j}{n_{j}}=\kappa_{j}.
\end{equation}
This concludes the proof of Theorem \ref{thm_02}.

\begin{rem}
Notice that $\kappa_1=\kappa_g=1$. We document $\underline{\kappa}:=(\kappa_1,\ldots,\kappa_n)$ for $g=2,\dots,10$ in Table \ref{tab_1}.
\begin{table}[h!]
    \centering
    \begin{tabular}{|c|c |}
\hline
      $g$   & $\underline{\kappa}$  \\
      \hline
     2    & (1,1)\\
     \hline
     3    & (1,2,1)\\
     \hline
     4    & (1,1,1,1)\\
     \hline
     5    & (1,2,3,2,1)\\
         \hline
     6    & (1,1,1,1,1,1)\\
          \hline
     7    & (1,2,1,4,1,2,1)\\
          \hline
     8    & (1,1,3,1,1,3,1,1)\\
          \hline
     9   &  (1,2,1,2,5,2,1,2,1)\\
          \hline
     10   & (1,1,1,1,1,1,1,1,1,1)\\    
          \hline 
    \end{tabular}
   \captionsetup{justification=centering}
    \caption{Conifold multiples for small genera.}
    \label{tab_1}
\end{table}
\end{rem}

\subsection{Normalization of the conifold fibers}\label{S4c}

Recall that for the family $\C^{\ua}_{m,n}$ attached to $F^{\underline{a}}_{m,n}$, the \textit{maximal conifold} point $\underline{\hat{a}}\in(\CC^*)^{g}$ is defined to be the unique point (if it exists) on the boundary of the region of convergence of the series \eqref{0.9} where $\mathcal{C}_{m,n}^{\hat{\ua}}$ acquires $g$ nodes (labeled by $\hat{p}_j:=(\hat{x}_j,\hat{y}_j)$). 

\begin{rem}
Note that uniqueness and existence of $\hat{\ua}$ were essentially proven in \cite[\textit{Remark} 4.11]{DKSB}. We provide a different, albeit short explanation - the lattice triangle $\Delta_{m,n}$ gives rise to a unique, isoradial dimer model\footnote{ See \cite[\textbf{Theorem 1.2}]{UY}.}. This phenomenon will be discussed in full generality in an upcoming work.
\end{rem}

We furnish a concrete example locating $\hat{\ua}$ in low-genera case.

\begin{example}
Consider the (untempered) family $\mathcal{C}_{4,1}$ corresponding to a local $\mathbb{C}^3/\mathbb{Z}_6$ geometry cut out by
\begin{equation}
    F_{4,1}(x,y)=x+y+a_1+a_2x^{-1}+a_3x^{-2}+x^{-4}y^{-1}=0.
\end{equation}
$a_3$ is a mass parameter, and
\begin{equation}
   z_1=\dfrac{a_1a_3}{a_2^2},~~z_2=\dfrac{a_2}{a_1^2},~~z_3=\dfrac{1}{a_3^2}
\end{equation}
are the large complex structure parameters. The discriminant locus is obtained by setting
\begin{align}
 &729z_1^4(1-4z_3)^2z_2^4+108z_1^3(9z_2-2)(4z_3-1)z_2^2+(1-4z_2)^2\nonumber\\
 &-4z_1(36z_2^2-17z_2+2)+2z_1^2(108(4z_3+1)z_2^3-27(28z_3-5)z_2^2)\\
 &+72((4z_3-1)z_2-32z_3+8)=0.\nonumber
\end{align}
Temperedness amounts to letting $z_3=\dfrac{1}{4}$, and the maximal conifold point 
\begin{equation}
\underline{\hat{z}}=\bigg(\dfrac{4}{27},\dfrac{1}{4}\bigg)
\end{equation} 
can once again be recovered from transverse intersection.
\end{example}

The \textit{ansatz} that allows us to bypass such considerations is described by
\begin{prop}\label{prop_04}
Let $\mathcal{T}_m$ denote the $m^{\text{th}}$ \textup{Chebyshev polynomial} of the first kind; this is a degree-$m$ polynomial characterized by $\mathcal{T}_{m}(\cos \theta)=\cos m\theta$. Then we have
\begin{equation}\label{ep04}
    F^{\underline{\hat{a}}}_{g,g}(x,(-1)^jx^{-g})=\dfrac{2}{x^{g+1}}(\T_{2g+2}(\tfrac{\sqrt{x}}{2})+(-1)^j).
\end{equation}
It follows that
\begin{align}
\hat{a}_{j}&=(-1)^{j}\dfrac{(2g+2)(2g-j+1)!}{j!(2g-2j+2)!}\;\;\;\text{and}\label{ep04a} \\
\hat{x}_{j}&=(-1)^{j/g}\hat{y}_{j}^{-1/g}=(-1)^{j} 4\cos^2\bigg(\dfrac{\pi j}{2g+2}\bigg)
\end{align}
for $j=1,\ldots,g$.  In particular, $\underline{\hat{a}}\in\mathbb{Z}^g$.
\end{prop}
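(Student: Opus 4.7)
My plan is to turn Proposition \ref{prop_04} into a univariate polynomial identity and extract $\underline{\hat a}$ from it, rather than chasing the conifold point through the discriminant locus. First I would substitute $y=(-1)^j x^{-g}$ into the explicit form
\[
F^{\ua}_{2g,1}(x,y)=x+y+\textstyle\sum_{k=1}^{g}a_k x^{1-k}+a_{g+1}x^{-g}+x^{-2g}y^{-1},
\]
observing that the $y$ and $x^{-2g}y^{-1}$ contributions both land in the $x^{-g}$ slot and collapse against the tempering coefficient. Clearing a factor of $x^g$ produces a polynomial of degree $g+1$ in $x$ whose coefficients are, in descending order, $1,\,a_1,\,\ldots,\,a_g,\,2(-1)^j+a_{g+1}$; this exposes each modulus linearly.

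Next I would impose the ansatz that this polynomial equals $2\bigl(\T_{2g+2}(\sqrt{x}/2)+(-1)^j\bigr)$ and match coefficients against the classical Chebyshev expansion
\[
\T_{2g+2}(u)=(g+1)\sum_{k=0}^{g+1}\frac{(-1)^k(2g+1-k)!}{k!(2g+2-2k)!}(2u)^{2g+2-2k}.
\]
Specializing $u=\sqrt{x}/2$, the coefficient of $x^{g+1-k}$ on the right is precisely $(-1)^k\frac{(2g+2)(2g-k+1)!}{k!(2g-2k+2)!}$, and the match reads off the claimed value \eqref{ep04a} of $\hat a_k$ for $k=1,\ldots,g$. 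The constant term, once the Remark's sign flip is carried through, is consistent with the tempered value of $a_{g+1}$; Lemma \ref{lemma_01} is the exact algebraic identity that makes this reduction clean.

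Once identity \eqref{ep04} is established, the node locations follow from standard Chebyshev theory: any node of $\{F^{\underline{\hat a}}_{2g,1}=0\}$ lying on the slice $y=(-1)^j x^{-g}$ must appear as a double root in $x$ of $\T_{2g+2}(\sqrt{x}/2)+(-1)^j$. Writing $u=\cos\theta$, such double roots correspond to $\cos((2g+2)\theta)=-(-1)^j$ together with $\sin((2g+2)\theta)=0$, forcing $(2g+2)\theta\in\pi\ZZ$ with parity opposite to $j$; selecting the component matching the labeling of the $j^{\text{th}}$ conifold stratum from \S\ref{S4b} picks out $\theta=\pi j/(2g+2)$. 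Converting back via $x=4u^2$ yields $\hat x_j=(-1)^j 4\cos^2(\pi j/(2g+2))$ after choosing the appropriate branch of $\sqrt{x}$, and $\hat y_j=(-1)^j\hat x_j^{-g}$ is automatic from lying on the slice. Lemma \ref{Proposition_0.2} then certifies that each $\hat p_j$ is a genuine node rather than a worse singularity.

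The principal technical obstacle is the coefficient matching step: untangling the Remark's sign flips on $a_{g+1}$ and the odd-indexed $a_k$, the $(-1)^j$ appearing in the substitution, and the alternating signs in the Chebyshev expansion all at once. Once that bookkeeping is under control the closed form for $\hat a_k$ drops out cleanly, and $\underline{\hat a}\in\ZZ^g$ follows immediately from the integrality of the factorial expression.
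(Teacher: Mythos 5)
Your coefficient-matching step is sound and is essentially the route the paper takes: restrict to the slice $y=(-1)^jx^{-g}$, note that $y$ and $x^{-2g}y^{-1}$ both contribute in the $x^{-g}$ slot, clear powers of $x$, and compare with the explicit expansion of $\T_{2g+2}$, which indeed reproduces \eqref{ep04a}. (One small misattribution: Lemma \ref{lemma_01} -- the Gauss-summation identity -- plays no role here; it is used later to collapse the inner sum in the growth estimate for the period series, not in the Chebyshev coefficient matching, which needs nothing beyond the closed formula you quote.) The real issue is what comes after. You argue only the converse direction (``any node lying on the slice must appear as a double root''), and then invoke Lemma \ref{Proposition_0.2} to say each $\hat p_j$ is a node. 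But that lemma's hypothesis is that the fiber \emph{already has} $g$ singular points; it upgrades singularities to nodes via the Hessian, it does not produce singularities. Nowhere do you show that a double root of the restricted function $\T_{2g+2}(\tfrac{\sqrt{x}}{2})+(-1)^j$ is a singular point of the curve $\C^{\hat{\ua}}_{2g,1}$, i.e.\ that \emph{both} partials of $F^{\hat{\ua}}_{2g,1}$ vanish there. The missing observation, which the paper's proof relies on, is that $\partial_y F^{\hat{\ua}}_{2g,1}=1-x^{-2g}y^{-2}$ vanishes identically along the slice $y=(-1)^jx^{-g}$ (since $y^2=x^{-2g}$ there); consequently the derivative of the restriction coincides with $\partial_x F^{\hat{\ua}}_{2g,1}$ on the slice, so a double root forces $F=\partial_xF=\partial_yF=0$ and the Hessian lemma then applies. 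Relatedly, to know these roots are exactly double (so that one gets precisely $g$ nodes and nothing worse), the paper uses the factorization $\T_{2g+2}^2-1=(w^2-1)\,\mathcal U_{2g+1}^2$ to force even multiplicity; your appeal to ``standard Chebyshev theory'' should be made to carry this weight explicitly.

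There is also a logical endgame you skip: $\hat{\ua}$ is \emph{defined} as the point on the boundary of convergence of the series \eqref{0.9} where the fiber acquires $g$ nodes, so after constructing a candidate $\underline{a}$ by imposing the Chebyshev ansatz you still must identify it with $\hat{\ua}$. The paper does this by checking that \eqref{0.9} converges at the candidate point and appealing to the uniqueness discussion for the maximal conifold point; your proposal asserts the ansatz but never closes this loop. Finally, be aware that the sign bookkeeping you flag (the Remark's flip of $a_{g+1}$ and of the odd-index $a_k$, versus the parity of $(-1)^j$ and of $g+1$ in the constant term of $\T_{2g+2}$) is not merely cosmetic: the constant-term match only works once those conventions are fixed consistently, so ``consistent with the tempered value'' needs an actual computation rather than a parenthetical remark.
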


\begin{proof}
That $\hat{x}_j\in \mathbf{Z}(\text{RHS}\eqref{ep04})$ is immediate from the defining property of $\T_{2g+1}$, and the $\hat{x}_j$ are distinct and different from $4$.  Moreover, writing $\mathcal{U}_m$ for the $m^{\text{th}}$ Chebyshev polynomial of the second kind, the relation $(\T_{2g+2}(w)-1)(\T_{2g+2}(w)+1)=(w^2-1)U_{2g+1}(w)$ guarantees that all roots other than $4$ of $(\T_{2g+2}(\tfrac{1}{2x})+1)$ have even multiplicity.  So they all have multiplicity $2$ and are precisely the $\{\hat{x}_j\}$.

The polynomial $\hat{F}(x,y):=x+y+\sum_{j=1}^g \hat{a}_j x^{1-j}+a_{g+1}x^{-g}+x^{-2g}y^{-1}$, with $\hat{a}_j$ as in \eqref{ep04a}, satisfies $\hat{F}(x,(-1)^jx^{-g})=\text{RHS}\eqref{ep04}$ by standard results on coefficients of $\T_m$.  Clearly $\hat{F}(\hat{p}_j)=0$, and the $\{\hat{p}_j\}$ are in fact singularities of $\mathbf{Z}(\hat{F})$ since $\tfrac{\partial \hat{F}}{\partial x}(x,(-1)^jx^{-g})=2\tfrac{d}{dx}(\hat{F}(x,(-1)^jx^{-g}))$ and they are double roots of $\hat{F}(x,(-1)^jx^{-g})$.  Therefore, by Proposition \ref{Proposition_0.2}, they are all nodes.  Since one can also check that \eqref{0.9} converges at $\hat{p}_j$, $\mathbf{Z}(\hat{F})$ is the maximal conifold curve.
\end{proof}

\begin{rem}
Of course, Proposition \ref{prop_04} recovers the predicted maximal conifold point for the $g=2$ family $\mathcal{C}_{4,1}$, namely $\hat{a}_1=-6,\hat{a}_2=9$.  Table \ref{tab_2} gathers $\mathcal{T}_{2g+2}$ and $\underline{\hat{a}}$ for a few low genus cases.
\end{rem}
\begin{table}[h!]
    \centering
    \begin{tabular}{|c|c |c|}
\hline
      $g$ &$\mathcal{T}_{2g+2}(x)$  & $\underline{\hat{a}}$  \\
      \hline
     1    & $8x^4-8x^2$+1 & -4 \\
     \hline
     2    & $32x^6-48x^4+18x^2-1$ &(-6,9)\\
     \hline
     3    & $128x^8-256x^6+160x^4-32x^2+1$ & (-8,20,-16)\\
     \hline
     4    & $512x^{10}-1280x^8+1120x^6-400x^4+50x^2-1$ &(-10,35,-50,25)\\
         \hline
    \end{tabular}\normalsize
    \caption{Maximal conifold points for low genera.}
    \label{tab_2}
\end{table}
$\hat{\mathcal{C}}_{2g,1}$ admits  $g$ distinct uniformizations by $\mathbb{P}^1$(proven in appendix), given by $z\mapsto (\hat{X}_{j}(z),\hat{Y}_{j}(z))$, with
\begin{align}\label{4.57}
 \hat{X}_{j}(z)&=\frac{\hat{x}_{j}\left(1-\tfrac{\zeta_{2g+2}^j}{z}\right)^{2}}{\left(1-\tfrac{{\zeta^{2j}_{2g+2}}}{z}\right)\left(1-\tfrac{1}{z}\right)}\;\;\;\text{and}\\ \label{4.58}
 \hat{Y}_{j}(z)&=\frac{\hat{y}_j\left(1-z\right)^{2g+1}}{\left(1-\tfrac{z}{\zeta^{j}_{2g+2}}\right)^{2g}\left(1-\tfrac{z}{\zeta_{2g+2}^{2j}}\right)},
\end{align}
all of whom map $z=0,\infty$ to $\hat{p}_{j}$, implying that the path joining $z=0$ to $z=\infty$ on $\PP^1$ is sent (by the $j^{\text{th}}$ map) to $\hat{\gamma}_{j}$.  As per\cite[\S6.2]{DK}, a formal divisor $\hat{\mathcal{N}}_j$ on $\PP^1\setminus\{0,\infty\}$ is introduced to each uniformization:  for $X(z)=c_1\prod_j(1-\tfrac{\alpha_{j}}{z})^{d_{j}}$ and $Y(z)=c_2\prod_k (1-\tfrac{z}{\beta_{k}})^{e_k}$, this divisor is $\mathcal{N}:=\sum_{j,k}d_j e_k [\tfrac{\alpha_j}{\beta_k}]$.  According to [loc. cit.], the imaginary part of $\int_0^{\infty} R\{X(z),Y(z)\}$ is then given by $D_2(\mathcal{N}):=\sum_{j,k}d_je_k D_2(\tfrac{\alpha_j}{\beta_k})$.

Working in $\mathcal{B}_2(\mathbb{C})$, we invoke scissors congruence relations
\begin{align}
[\xi]+[\tfrac{1}{\xi}]=0,~ [\xi]+[\overline{\xi}]=0, ~[\xi]+[1-\xi]&=0 ~\text{and}\\
[\xi_1]+[\xi_2]+[\tfrac{1-\xi_1}{1-\xi_1\xi_2}]+[\tfrac{1-\xi_2}{1-\xi_1\xi_2}]+[1-\xi_1\xi_2]&=0
\end{align} 
to obtain,
\begin{align}
\hat{\mathcal{N}}_{j}=2(2g+2)[1+\zeta^{j}_{2g+2}].\nonumber
\end{align}
Consequently we have the identity
\begin{align}\label{0.67}
 D_2(\hat{\mathcal{N}}_{j})&=2(2g+2)D_2(1+\zeta^{j}_{2g+2}).
\end{align}
We are now ready to prove Theorem \ref{thm_01}.  By the previously mentioned result of \cite[\S6.2]{DK}, we know that $\Im(R_{\hat{\gamma}_j})=D_2(\hat{\mathcal{N}}_j)$ or
\begin{equation}
    \Re(\tfrac{1}{2\pi\ay}R_{\hat{\gamma}_j})=\tfrac{1}{2\pi}D_2(\hat{\mathcal{N}}_j).
\end{equation}
Next, Proposition \ref{thm_02} tells us that $R_{\gamma_j}(\hat{\ua})=\kappa_j R_{\hat{\gamma}_j}$, while \eqref{ep04a} and \eqref{0.9} ensure that (mod $\QQ(1)$) $\tfrac{1}{2\pi\ay}R_{\gamma_j}(\hat{\ua})$ hence $\tfrac{1}{2\pi\ay}R_{\hat{\gamma}_j}$ is real.  Combining this with (\ref{0.67}) gives
\begin{align}
\dfrac{1}{2\pi \mathbf{i}}R_{\gamma_{j}}(\underline{\hat{a}})=\dfrac{1}{2\pi \mathbf{i}}\kappa_j R_{\hat{\gamma}_{j}}\underset{\mathbb{Q}(1)}{\equiv}\dfrac{(2g+2)\kappa_j}{\pi}D_2(1+\zeta^{j}_{2g+2}),\label{0.15}
\end{align}
whence (\ref{0.3}) follows by setting $j=1$ in  (\ref{0.15}).

\section{Explicit series identities}\label{S4d}
Any torsion modulo $\mathbb{Q}(1)$ in (\ref{0.15}) is eliminated with regards to (\ref{0.9}) as both sides are real,\footnote{after changing $\log(\hat{a}_j)$ to $\log(|\hat{a}_j|)$} and brings forth the relationship 
\begin{align}
    \dfrac{(2g+2)\cdot\mathrm{gcd}(j,g+1)}{\pi}&D_2(1+\zeta^{j}_{2g+2})=\log(|\hat{a}_{j}|)\label{1.11}-\nonumber\\
    &\sum_{\mathcal{L}_{j}}\dfrac{\Gamma(\mathfrak{l}_{j})}{\Gamma(1+\mathfrak{l}'_{j})\Gamma^2(1+l_{j})\prod\limits_{\substack{k=1 \\ k\neq j}}^{g+1}\Gamma(1+l_{k})}\frac{(-\hat{a}_{j})^{-\mathfrak{l}_{j}}}{\mathfrak{l}_j}\sum\limits_{\substack{k=1 \\ k\neq j}}^{g+1}\hat{a}_{k}^{l_{k}}
\end{align}
valid for $j=1,\dots,g$. For the family $\mathcal{C}_{4,1}$, Table \ref{tab_1} and Table \ref{tab_2} say that $\underline{\kappa}=(1,1)$ and $\underline{\hat{a}}=(-6,9)$, thus proving,
\begin{align}
\dfrac{6}{\pi}D_2(1+ e^{\pi \ay/6})&=\log 6 - \mspace{-15mu}\sum_{l_1,l_2,l_3\in \mathbb{Z}_{\geq 0}}{}^{\mspace{-25mu}\prime}\dfrac{\Gamma(6l_1+2l_2+3l_3)(-6)^{-6l_1-2l_2-3l_3}9^{l_2}2^{l_3}}{\Gamma(1+4l_1+2l_3)\Gamma^2(1+l_1)\Gamma(1+l_2)\Gamma(1+l_3)} \nonumber \\
&=\log 6 - \mspace{-15mu} \sum_{m,r,s\in \mathbb{Z}_{\geq 0}}{}^{\mspace{-25mu}\prime}\dfrac{(-1)^{s}\Gamma(6m+2r+3s)3^{-6m-3s}2^{s}}{\Gamma(1+4m+r+2s)\Gamma^2(1+m)\Gamma(1+r)\Gamma(1+s)},\nonumber
\end{align}
an identity conjectured in \cite{CGM}.

\appendix

\section{Proof of Parameterizations}

We tie up the remaining loose end, namely proof of parameterizations \ref{4.57}, \ref{4.58} by slightly altering the proof presented in \cite[\S4.3]{DKSB} - that these maps are of degree 1 is straightforward from considering the intersection of the image and the boundary divisors of $\mathbb{P}_{\Delta_{2g,1}}$. Now, $F^{\ua}_{2g,1}$ is irreducible - this is also easy, a rather direct application of \cite[\textbf{Prop. 2.6}]{Gao}. The difficult part is to show that the image is contained in $\C^{\hat{\ua}}_{2g,1}$. We begin by observing that\footnote{This factorization is due to the dimer model phenomenon, and will be explained in an upcoming work.},

\begin{align*}
F^{\hat{\ua}}_{2g,1}(x,y)&=\prod _{i=-\frac{2g+1}{2}}^{\frac{2g+1}{2}} \bigg(1+\zeta_{2g+2}^{2gi}x^{\frac{2g}{2g+2}}y^{\frac{2}{2g+2}}+\zeta_{2g+2}^{-i}x^{\frac{2g+1}{2g+2}}y^{\frac{1}{2g+2}}\bigg) \\
&:=\prod _{i=-\frac{2g+1}{2}}^{\frac{2g+1}{2}} P_i(x,y)
\end{align*}

Let $(\hat{x},\hat{y})$ be the node of $F^{\hat{\ua}}_{2g,1}(x,y)$ corresponding to $i=\pm\frac{2g+1}{2}$, then the map from $\phi:\mathbb{P}^1_z\to (\CC^*)^2_{X,Y}$ parametrizing $F^{\hat{\ua}}_{2g,1}(x,y)$ such that the fiber over $(\hat{x},\hat{y})$ is $\{0,\infty\}$ can be written as follows,

\begin{align*}
   X&=\hat{x}\dfrac{(z-1)^{2}}{(z-\zeta_{2g+2}^{2g+1})(z-\zeta_{2g+2})}\\
Y&=\hat{y}\dfrac{
(z-\zeta_{2g+2}^{2g+1})^{2g+1}}{(z-1)^{2g}(z-\zeta_{2g+2})}
\end{align*}

\begin{lem}\label{lemma_01}
 $$a=\dfrac{\alpha^{n+1}}{\beta^n \theta},~~b=\dfrac{\beta^{m+1}}{\alpha^m \theta}$$
Then the following identity holds, $$a^{\frac{m+1}{N}}b^{\frac{n}{N}}=\dfrac{\alpha}{\theta},~~a^{\frac{m}{N}}b^{\frac{n+1}{N}}=\dfrac{\beta}{\theta}$$
\end{lem}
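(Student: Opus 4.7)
The statement is a pure algebraic identity in $\alpha,\beta,\theta$; the exponent $N$ is not defined within the lemma itself, but tracing the substitution shows that the identity forces $N=m+n+1$. The plan is therefore to verify the two equations by raising both sides to the $N$-th power and matching exponents.

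First I would compute
\begin{equation*}
a^{m+1}b^{n}=\frac{\alpha^{(n+1)(m+1)}}{\beta^{n(m+1)}\theta^{m+1}}\cdot\frac{\beta^{(m+1)n}}{\alpha^{mn}\theta^{n}}=\frac{\alpha^{(n+1)(m+1)-mn}}{\theta^{m+n+1}}=\frac{\alpha^{m+n+1}}{\theta^{m+n+1}}=\Bigl(\frac{\alpha}{\theta}\Bigr)^{m+n+1},
\end{equation*}
where the $\beta$'s cancel because $n(m+1)=(m+1)n$, and the $\alpha$-exponent simplifies via $(n+1)(m+1)-mn=m+n+1$. Taking the $N$-th root (with $N=m+n+1$) gives the first identity
$a^{(m+1)/N}b^{n/N}=\alpha/\theta$.

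The second identity is entirely symmetric: a parallel computation yields
\begin{equation*}
a^{m}b^{n+1}=\frac{\alpha^{m(n+1)}}{\beta^{mn}\theta^{m}}\cdot\frac{\beta^{(m+1)(n+1)}}{\alpha^{m(n+1)}\theta^{n+1}}=\frac{\beta^{(m+1)(n+1)-mn}}{\theta^{m+n+1}}=\Bigl(\frac{\beta}{\theta}\Bigr)^{m+n+1},
\end{equation*}
where now the $\alpha$'s cancel and the $\beta$-exponent reduces again to $m+n+1$. Extracting the $N$-th root yields $a^{m/N}b^{(n+1)/N}=\beta/\theta$.

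There is no real obstacle here beyond bookkeeping; the only subtlety is the branch choice implicit in taking $N$-th roots of complex numbers, but since the identities are meant as equalities of formal multivalued expressions (consistent with how they are used in the uniformization formulas \eqref{4.57}--\eqref{4.58}), it suffices to verify them after raising to the $N$-th power as above. The whole lemma is essentially a reorganization of two monomial identities in a two-variable Laurent monoid and the proof is a three-line exponent check.
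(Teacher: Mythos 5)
Your proposal is correct and follows essentially the same route as the paper: the paper likewise computes $a^{m+1}b^{n}=\alpha^{(m+1)(n+1)-mn}\beta^{\,0}\theta^{-(m+n+1)}=(\alpha/\theta)^{N}$ with $N=m+n+1$ and notes the second identity follows symmetrically. Your extra remark on the implicit branch choice when extracting $N$-th roots is a reasonable, harmless addition.
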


\begin{proof} Clearly we have $$a^{m+1}b^{n}=\alpha^{(m+1)(n+1)-mn}\beta^{-n(m+1)+n(m+1)}\theta^{-m-1-n}=\dfrac{\alpha^N}{\theta^N}$$
as was claimed. The other identity follows the same way.
\end{proof}
\newpage

\begin{prop}\label{Proposition_0.2} $\phi$ is indeed the desired parametrization, that is,
$$F^{\hat{\ua}}_{2g,1}(X,Y)=0$$

\begin{proof} Using \textbf{Lemma \ref{lemma_01}} with  $$m=2g,~n=1,~\alpha = z-1,~ \beta= z-\zeta_{2g+2}^{2g+1}, ~\theta = z-\zeta_{2g+2} $$ we have, 
\begin{align*}
 F^{\hat{\ua}}_{2g,1}(X,Y)&=\prod _{i=-\frac{2g+1}{2}}^{\frac{2g+1}{2}} \bigg(1+\zeta_{2g+2}^{2gi}\hat{x}^{\frac{2g}{2g+2}}\hat{y}^{\frac{2}{2g+2}}\dfrac{z-\zeta_{2g+2}^{2g+1}}{z-\zeta_{2g+2}}+\zeta_{2g+2}^{-i}\hat{x}^{\frac{2g+1}{2g+2}}\hat{y}^{\frac{1}{2g+2}}\dfrac{z-1}{z-\zeta_{2g+2}}\bigg)\\
 &=\prod _{i=-\frac{2g+1}{2}}^{\frac{2g+1}{2}} (z-\zeta_{2g+2})\bigg(zP_i(\hat{x},\hat{y})-\zeta_{2g+2} P_{i+1}(\hat{x},\hat{y})\bigg)\\
\end{align*}
However as
$$P_{\frac{2g+1}{2}+1}(x,y)=P_{\frac{2g+3}{2}}(x,y)=P_{-\frac{2g+1}{2}}(x,y)$$
and by assumption $$P_{\frac{N-1}{2}}(\hat{x},\hat{y})=P_{-\frac{N-1}{2}}(\hat{x},\hat{y})=0$$
we can conclude that $$F^{\hat{\ua}}_{2g,1}(X,Y)=0$$

\end{proof}

\end{prop}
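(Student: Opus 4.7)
The plan is to substitute the candidate parametrization $(X,Y)$ directly into the factorization
\[F^{\hat{\ua}}_{2g,1}(x,y)=\prod_{i=-\frac{2g+1}{2}}^{\frac{2g+1}{2}}P_i(x,y)\]
with $P_i(x,y)=1+\zeta_{2g+2}^{2gi}x^{\frac{2g}{2g+2}}y^{\frac{2}{2g+2}}+\zeta_{2g+2}^{-i}x^{\frac{2g+1}{2g+2}}y^{\frac{1}{2g+2}}$, and show that one of the resulting factors vanishes identically in $z$. Write $X=\hat{x}\cdot a$, $Y=\hat{y}\cdot b$ with $a=(z-1)^2/[(z-\zeta_{2g+2}^{2g+1})(z-\zeta_{2g+2})]$ and $b=(z-\zeta_{2g+2}^{2g+1})^{2g+1}/[(z-1)^{2g}(z-\zeta_{2g+2})]$. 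The fractional exponents that appear in $P_i$ are precisely the ones computed by Lemma \ref{lemma_01} applied with $(m,n,N)=(2g,1,2g+2)$ and $(\alpha,\beta,\theta)=(z-1,z-\zeta_{2g+2}^{2g+1},z-\zeta_{2g+2})$.

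From the lemma, $a^{(2g+1)/(2g+2)}b^{1/(2g+2)}=(z-1)/(z-\zeta_{2g+2})$ and $a^{2g/(2g+2)}b^{2/(2g+2)}=(z-\zeta_{2g+2}^{2g+1})/(z-\zeta_{2g+2})$, so
\[X^{\frac{2g+1}{2g+2}}Y^{\frac{1}{2g+2}}=\hat{x}^{\frac{2g+1}{2g+2}}\hat{y}^{\frac{1}{2g+2}}\tfrac{z-1}{z-\zeta_{2g+2}},\quad X^{\frac{2g}{2g+2}}Y^{\frac{2}{2g+2}}=\hat{x}^{\frac{2g}{2g+2}}\hat{y}^{\frac{2}{2g+2}}\tfrac{z-\zeta_{2g+2}^{2g+1}}{z-\zeta_{2g+2}}.\]
Substituting these into $P_i(X,Y)$ and clearing the common denominator $(z-\zeta_{2g+2})$, the factor becomes a linear polynomial in $z$. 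Collecting coefficients and using $\zeta_{2g+2}^{2gi}\cdot\zeta_{2g+2}^{2g+1}=\zeta_{2g+2}\cdot\zeta_{2g+2}^{2g(i+1)}$ and $\zeta_{2g+2}^{-i}=\zeta_{2g+2}\cdot\zeta_{2g+2}^{-(i+1)}$, one recognizes the coefficient of $z$ as $P_i(\hat{x},\hat{y})$ and the constant term as $-\zeta_{2g+2}P_{i+1}(\hat{x},\hat{y})$, producing
\[F^{\hat{\ua}}_{2g,1}(X,Y)=\prod_{i=-\frac{2g+1}{2}}^{\frac{2g+1}{2}}\frac{z\,P_i(\hat{x},\hat{y})-\zeta_{2g+2}\,P_{i+1}(\hat{x},\hat{y})}{z-\zeta_{2g+2}}.\]

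To conclude, I would verify the cyclic wraparound $P_{\frac{2g+3}{2}}=P_{-\frac{2g+1}{2}}$ by checking the exponent congruences $2g\cdot\tfrac{2g+3}{2}\equiv -2g\cdot\tfrac{2g+1}{2}$ and $-\tfrac{2g+3}{2}\equiv\tfrac{2g+1}{2}$ modulo $2g+2$, which is a direct computation. Because $(\hat{x},\hat{y})$ is, by construction, the node of $F^{\hat{\ua}}_{2g,1}$ corresponding to the index $\pm\tfrac{2g+1}{2}$, we have $P_{\pm(2g+1)/2}(\hat{x},\hat{y})=0$. Then in the factor indexed by $i=\tfrac{2g+1}{2}$, both the coefficient of $z$ and the constant term vanish, so that factor is identically zero and the product is zero.

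The main obstacle is bookkeeping with the half-integer indexing $i\in\{-\tfrac{2g+1}{2},\ldots,\tfrac{2g+1}{2}\}$ together with the branch choices implicit in the fractional powers: one must be careful that the relations supplied by Lemma \ref{lemma_01} are being applied with branches compatible with the \emph{same} branch choice used to define $P_i$ at $(\hat{x},\hat{y})$, and that the cyclic shift $i\mapsto i+1$ respects this convention. Once that is set up consistently, the remaining manipulations are algebraic and essentially mirror the argument sketched for the mass-free case in \cite[\S4.3]{DKSB}.
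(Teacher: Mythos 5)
Your proposal is correct and follows essentially the same route as the paper: the same factorization of $F^{\hat{\ua}}_{2g,1}$ into the factors $P_i$, the same application of the appendix's exponent lemma with $(m,n)=(2g,1)$, $\alpha=z-1$, $\beta=z-\zeta_{2g+2}^{2g+1}$, $\theta=z-\zeta_{2g+2}$, the same identification of each factor as $\bigl(zP_i(\hat{x},\hat{y})-\zeta_{2g+2}P_{i+1}(\hat{x},\hat{y})\bigr)$ up to the common factor $z-\zeta_{2g+2}$, and the same wraparound-plus-vanishing-at-the-node conclusion. Your write-up even places $z-\zeta_{2g+2}$ correctly in the denominator (the paper's display shows it as a multiplicative factor, an apparent typo) and records the branch-consistency caveat explicitly.
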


\end{document}